\newtheorem{theorem}{Theorem}[section]
\newtheorem{proposition}[theorem]{Proposition}
\newtheorem{lemma}[theorem]{Lemma}
\theoremstyle{definition}
\theoremstyle{remark}
\DeclareMathOperator{\tr}{tr}
\DeclareMathOperator{\rank}{rank}
\newcommand{\tp }{{\scriptscriptstyle\mathsf{T}}}
\renewcommand{\le}{\leqslant}
\renewcommand{\ge}{\geqslant}
\newcommand{\myuline}[1]{%
  \uline{\phantom{#1}}%
  \llap{\contour{white}{#1}}%
}
\begin{document}
\title{Grothendieck constant is norm of Strassen matrix multiplication tensor}
\author{Jinjie~Zhang}
\address{Department of Statistics, University of Chicago, Chicago, IL, 60637-1514.}
\email{jinjie@galton.uchicago.edu}
\author{Shmuel~Friedland}
\address{Department of Mathematics, Statistics and Computer Science,  University of Illinois, Chicago, IL,  60607-7045.}
\email{friedlan@uic.edu}
\author{Lek-Heng~Lim}
\thanks{This article provides the details for Slide 16 in \url{http://www.ub.edu/focm2017/slides/Lim.pdf}, presented during the Smale Prize Lecture of the 2017 FoCM conference.}
\address{Computational and Applied Mathematics Initiative, Department of Statistics,
University of Chicago, Chicago, IL 60637-1514.}
\email[corresponding author]{lekheng@galton.uchicago.edu}

\begin{abstract}
We show that two important quantities from two disparate areas of complexity theory --- Strassen's exponent of matrix multiplication $\omega$ and Grothendieck's constant $K_G$
--- are intimately related. They are different measures of size for the same underlying object --- the matrix multiplication tensor, i.e., the $3$-tensor or bilinear operator $\mu_{l,m,n} : \mathbb{F}^{l \times m} \times \mathbb{F}^{m \times n} \to \mathbb{F}^{l \times n}$, $(A,B) \mapsto AB$ defined by matrix-matrix product over $\mathbb{F} = \mathbb{R}$ or $\mathbb{C}$. It is well-known that Strassen's exponent of matrix multiplication is the greatest lower bound on (the log of) a \emph{tensor rank} of $\mu_{l,m,n}$. We will show that Grothendieck's constant is the least upper bound on a \emph{tensor norm} of $\mu_{l,m,n}$, taken over all $l, m, n \in \mathbb{N}$. Aside from relating the two celebrated quantities, this insight allows us to rewrite Grothendieck's inequality as a norm inequality
\[
\lVert\mu_{l,m,n}\rVert_{1,2,\infty} =\max_{X,Y,M\neq0}\frac{|\tr(XMY)|}{\lVert X\rVert_{1,2}\lVert Y\rVert_{2,\infty}\lVert M\rVert_{\infty,1}}\le  K_G.
\]
We prove that Grothendieck's inequality is unique: If we generalize the $(1,2,\infty)$-norm to arbitrary $p,q, r \in [1, \infty]$,
\[
\lVert\mu_{l,m,n}\rVert_{p,q,r}=\max_{X,Y,M\neq0}\frac{|\tr(XMY)|}{\|X\|_{p,q}\|Y\|_{q,r}\|M\|_{r,p}},
\]
then $(p,q,r )=(1,2,\infty)$ is, up to cyclic permutations, the only choice for which  $\lVert\mu_{l,m,n}\rVert_{p,q,r}$ is uniformly bounded by a constant independent of $l,m,n$.
\end{abstract}

\subjclass[2010]{15A60, 46B28, 46B85, 47A07, 65Y20, 68Q17, 68Q25}

\keywords{Grothendieck's constant, Grothendieck's inequality, fast matrix multiplication, Strassen's matrix multiplication tensor, tensor norms, tensor rank}

\maketitle

\section{Introduction}\label{sec:intro}
Grothendieck's inequality was originally established to relate fundamental norms on tensor product spaces \cite{Grothendieck}. Throughout this article, we will let $\mathbb{F} = \mathbb{R}$ or $\mathbb{C}$. The \emph{Grothendieck constant} $K_G^\mathbb{F}$ is the sharp constant such that for every $l,m,n\in\mathbb{N}$ and every matrix $M=(M_{ij})\in\mathbb{F}^{m\times n}$,
\begin{equation}\label{GI}
\max_{\lVert x_i\rVert = \lVert y_j \rVert = 1}\Bigl| \sum_{i=1}^m\sum_{j=1}^n M_{ij} \langle x_{i}, y_{j}\rangle\Bigr|\le  K_G^\mathbb{F}\max_{\lvert \varepsilon_i \rvert = \lvert \delta_j \rvert = 1}\Bigl|\sum_{i=1}^m\sum_{j=1}^n M_{ij} \varepsilon_i \delta_j\Bigr|
\end{equation}
where the maximum on the left is take over all $x_i,y_j\in\mathbb{F}^{l}$ of unit $2$-norm, and the maximum on the right is taken over all  $\varepsilon_i, \delta_j \in \mathbb{F}$ of unit absolute value (so over $\mathbb{R}$, $\varepsilon_i = \pm 1$ and  $\delta_j=\pm1$; over $\mathbb{C}$, $\varepsilon_i = e^{i\theta_i}$ and  $\delta_j= e^{i \phi_j}$). The value on the left side of \eqref{GI} is the same for all $l\ge  m+n$ and as such some authors restrict themselves to  $l = m+n$.

The existence of a such a constant independent of $l$, $m$ and $n$  was discovered by Alexandre Grothendieck in $1953$. Alternative proofs via factorization of linear operators, geometry of Banach spaces, absolutely $p$-summing operators, etc, may be found in \cite{Pisier, Jameson, Lindenstrauss,Pisier2} and references therein. In particular, the formulation in \eqref{GI} was due to Lindenstrauss and Pe{\l}czy\'nski \cite{Lindenstrauss}.

The inequality has found  applications in numerous areas, including Banach space theory, $C^{*}$ algebra, harmonic analysis, operator theory, quantum mechanics, and most recently, computer science. In theoretical computer science, Grothendieck's inequality has notably appeared in studies of unique games conjecture \cite{Khot, Khot2, Kindler, Rag, Rag2} and SDP relaxations of NP-hard combinatorial problems \cite{Alon1, Alon2, Alon3, Arora, Charikar}. In quantum information theory, Grothendieck's inequality arises unexpectedly in Bell inequalities  \cite{Fishburn, Tsi1, Heydari} and in XOR games \cite{Bri1, Bri2, Bri3}, among several other areas; Grothendieck constants of specific orders, e.g., $K_G^\mathbb{C}(3)$ and $K_G^\mathbb{C}(4)$, also have important roles to play in quantum information theory \cite{Acin, Hirsch, Bene}. The inequality has even been applied to some rather surprising areas, e.g.,  to communication complexity \cite{Linial, Reg1, Reg2} and to privacy-preserving data analysis \cite{Talwar}.

Although the Grothendieck constant appears in numerous mathematical statements and has many equivalent interpretations in physics and computer science, its exact value  remains unknown and estimating increasingly sharper bounds for $K_G^\mathbb{F}$ has been a major undertaking. The current best known bounds are $K_G^\mathbb{R}\in[1.676, 1.782]$, established in \cite{Davie} (lower) and \cite{Krivine} (upper); and $K_G^\mathbb{C}\in(1.338, 1.404]$, established in \cite{Davie2} (lower) and \cite{Haagerup} (upper). A major recent breakthrough  \cite{Braverman} established that Krivine's upper bound $\pi/\bigl(2\log(1+\sqrt{2})\bigr)\approx 1.782$ for $K_G^\mathbb{R}$ is not sharp. There have also been efforts in approximating Grothedieck's constants of specific orders, e.g., see \cite{Hirsch, Bene} for recent results on $K_G^\mathbb{C}(3)$ and $K_G^\mathbb{C}(4)$.

A world apart from the aforementioned areas touched by Grothendieck's inequality is the problem of complexity of matrix inversion, or equivalently, matrix multiplication, pioneered by  Volker Strassen \cite{Strass, Strass0, Strass1, Strass2}. A systematic study of this and other related problems has blossomed into what is now often called algebraic computational complexity \cite{BCS}. For the uninitiated, Strassen famously discovered in \cite{Strass} that the product of a pair of $2 \times 2$ matrices may be obtained with just seven multiplications:
\[
\begin{bmatrix}
a_1 & a_2 \\
a_3 & a_4
\end{bmatrix}
\begin{bmatrix}
b_1 & b_2 \\
b_3 & b_4
\end{bmatrix}
=
\begin{bmatrix}
a_1 b_1 + a_2 b_2 & \beta +\gamma +  (a_1+a_2 -a_3 -a_4)b_4\\
\alpha +\gamma+a_4(b_2+b_3 -b_1 -b_4) &  \alpha + \beta + \gamma
\end{bmatrix},
\]
where $\alpha = (a_3-a_1)(b_3 -b_4)$, $\beta = (a_3+a_4)(b_3- b_1)$, $\gamma =a_1 b_1 + (a_3+a_4 -a_1)(b_1 + b_4 - b_3)$. Applied recursively, this gives an algorithm for forming the product of a pair of $n \times n$ matrices with just  $O(n^{\log_2 7})$ multiplications, as opposed to $O(n^3)$ using the usual formula for matrix-matrix product. In addition, Strassen also showed that: (i) the number of additions may be bounded by a constant times the number of multiplications; (ii) matrix inversion may be achieved with the same complexity as matrix multiplication. In short, if there is an algorithm that forms matrix product in $O(n^\omega)$ multiplication, there it yields a $O(n^\omega)$  algorithm  that would solve $n$ linear equations in $n$ unknowns, which is by far the most ubiquitous problem in all of scientific and engineering computing. The smallest possible $\omega$ became known as the \emph{exponent of matrix multiplication}.

Strassen's astounding discovery captured the interests of numerical analysts and theoretical computer scientists alike and the complexity was gradually lowered over the years. Some milestones include the Coppersmith--Winograd \cite{CW} bound $O(n^{2.375477})$ that resisted progress for more than two decades until Vassilevska-Williams's improvement \cite{Williams} to $O(n^{2.3728642})$; the current record, due to Le Gall \cite{LeGall}, is $O(n^{2.3728639})$. Strassen showed \cite{Strass2} that the best possible $\omega$ is in fact given by
\[
\omega =\inf_{n \in \mathbb{N}} \log_n \bigl(\rank(\mu_{n,n,n})\bigr),
\]
where $\mu_{n,n,n}$  is the \emph{Strassen matrix multiplication tensor} --- the $3$-tensor in $(\mathbb{F}^{n \times n} )^* \otimes (\mathbb{F}^{n \times n} )^* \otimes \mathbb{F}^{n \times n} $ associated with matrix-matrix product, i.e., the bilinear operator
\[
\mathbb{F}^{n \times n} \times  \mathbb{F}^{n \times n} \to  \mathbb{F}^{n \times n}, \qquad (A, B) \mapsto AB.
\]
Those unfamiliar with multilinear algebra \cite{Lang} may regard the $3$-tensor $\mu_{n,n,n}$ and the bilinear operator as the same object. If we choose a basis on $\mathbb{F}^{n \times n}$ (or three different bases, one on each copy of  $\mathbb{F}^{n \times n}$), then $\mu_{n, n, n}$ may be represented as a $3$-dimensional hypermatrix in $\mathbb{F}^{n^2 \times n^2 \times n^2}$. Over any $\mathbb{F}$-vector spaces $\mathbb{U}$, $\mathbb{V}$, $\mathbb{W}$, one may define \emph{tensor rank} \cite{Hi1} for $3$-tensors $\tau \in \mathbb{U} \otimes \mathbb{V} \otimes \mathbb{W}$ by
\[
\rank(\tau) =\min\left\{r :
\tau=\sum_{i=1}^{r} \lambda_i u_{i}\otimes v_{i}\otimes w_{i}\right\}.
\]
In fact, Strassen showed that the tensor rank of a  $3$-tensor $\mu_\beta \in \mathbb{U}^* \otimes \mathbb{V}^* \otimes \mathbb{W}$  associated with a bilinear operator  $\beta : \mathbb{U} \times \mathbb{V} \to \mathbb{W}$ gives the least number of multiplications required to compute $\beta$. The value of $\omega$ is in general dependent on the choice of $\mathbb{F}$, as tensor rank is well-known to be field dependent  \cite{HLA}.

What exactly is $\omega$? The above discussion shows that it is the sharp lower bound for the (log of the) tensor rank of the Strassen matrix multiplication tensor:
\begin{equation}\label{eq:strass1}
\log_n\bigl( \rank(\mu_{n,n,n}) \bigr) \ge  \omega \qquad \text{for all}\; n \in \mathbb{N}.
\end{equation}

What exactly is $K_G^\mathbb{F}$? We will show that it is the sharp upper bound for the tensor $(1,2,\infty)$-norm of the Strassen matrix multiplication tensor:
\begin{equation}\label{eq:gro1}
\lVert\mu_{l,m,n}\rVert_{1,2,\infty} \le  K_G^\mathbb{F}\qquad \text{for all} \; l, m, n \in \mathbb{N}.
\end{equation}
If we desire a greater parallel to \eqref{eq:strass1}, we may drop $l$ and $m$ in \eqref{eq:gro1} --- there is no loss of generality in assuming that $l=2n$ and $m = n$, i.e., $K_G^\mathbb{F}$ is also the sharp upper bound so that
\[
\lVert\mu_{2n,n,n}\rVert_{1,2,\infty} \le  K_G^\mathbb{F} \qquad \text{for all} \; n \in \mathbb{N}.
\]
In addition, the \emph{Grothendieck constant of order $l \in \mathbb{N}$}, a popular notion in quantum information theory (e.g., \cite{Acin, Bene, Hirsch}), is given by a simple variation, namely, the sharp upper bound $ K_G^\mathbb{F} (l) $ in
\[
\lVert\mu_{l,m,n}\rVert_{1,2,\infty} \le K_G^\mathbb{F} (l) \qquad \text{for all} \;  m, n \in \mathbb{N}.
\]

We will define the $(1,2,\infty)$-norm for an arbitrary $3$-tensor formally in Section~\ref{sec:ineq} but at this point it suffices to know its value for $\mu_{l,m,n}$, namely,
\[
\lVert\mu_{l,m,n}\rVert_{1,2,\infty} =\max_{X,Y,M\neq0}\frac{|\tr(XMY)|}{\lVert X\rVert_{1,2}\lVert Y\rVert_{2,\infty}\lVert M\rVert_{\infty,1}}
\]
where $X\in\mathbb{F}^{l\times m}$, $M\in\mathbb{F}^{m\times n}$, $Y\in\mathbb{F}^{n\times l}$, and $\lVert M \rVert_{p,q} \coloneqq  \max_{x \ne 0} \lVert Mx \rVert_q /\lVert x \rVert_p$ denotes the matrix $(p,q)$-norm.

The inequality \eqref{eq:gro1} is in fact just Grothendieck's inequality.  The characterizations of $\omega$ and $K_G$ in \eqref{eq:strass1} and \eqref{eq:gro1} hold over both $\mathbb{R}$ and $\mathbb{C}$ although their values are field dependent. Incidentally the fact that  Grothendieck's constant is essentially a tensor norm immediately explains why  it is field dependent --- because, as is the case for tensor rank, tensor norms are also field dependent \cite{nuclear}.

An advantage of the formulation in \eqref{eq:gro1} is that we obtain  a natural family of $(p,q,r)$-norms on $\mu_{l,m,n}$ given by
\[
\lVert\mu_{l,m,n}\rVert_{p,q,r}\coloneqq \max_{X,Y,M\neq0}\frac{|\tr(XMY)|}{\lVert X\rVert_{p,q}\lVert Y\rVert_{q,r}\lVert M\rVert_{r,p}}
\]
for any  triple $1 \le p,q,r \le \infty$. This family of norms will serve as  a platform for us to better comprehend Grothendieck's constant and Grothendieck's inequality. The study of the general $(p,q,r)$-case shows why the $(1,2,\infty)$-case is extraordinary. We will deduce a generalization of Grothendieck's inequality and show that the case $(p,q,r) = (1,2,\infty)$, i.e., Grothendieck's inequality, is the only one up to trivial cyclic permutations\footnote{Unavoidable as $(p,q,r)$-norms are clearly invariant under cyclic permutations of $p,q,r$. See Lemma~\ref{lem:simple}\eqref{it:cyclic}.} where there is a universal upper bound, i.e., Grothendieck's constant, that holds for all $l,m,n \in \mathbb{N}$.
\begin{theorem}[Grothendieck--H\"older inequality]\label{thm:GroHol}
Let $1 \le p,q,r\le \infty$ and $l,m,n\in\mathbb{N}$. Then
\[
\frac{1}{l^{|1/q-1/2|}\cdot m^{|1/p-1/2|}\cdot n^{|1/r-1/2|}} \le
\lVert\mu_{l,m,n}\rVert_{p,q,r}\le  K_G^\mathbb{F} \cdot l^{|1/q-1/2|} \cdot m^{1-1/p} \cdot n^{1/r}.
\]
In particular, when $p=1$, $q=2$, and $r=\infty$, the upper bound gives Grothendieck's inequality \eqref{GI}.
\end{theorem}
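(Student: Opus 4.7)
The plan is to obtain the upper bound by bootstrapping from Grothendieck's inequality --- the $(p,q,r)=(1,2,\infty)$ case recorded in \eqref{eq:gro1} --- and promoting it to arbitrary $p,q,r\in[1,\infty]$ via elementary $\ell^p$-norm comparisons on finite-dimensional spaces. The lower bound will come from a single rank-one test that already yields $\lVert\mu_{l,m,n}\rVert_{p,q,r}\ge 1$, which is stronger than the claimed bound since $l^{|1/q-1/2|}m^{|1/p-1/2|}n^{|1/r-1/2|}\ge 1$.

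For the upper bound I would first prove the three operator-norm comparisons
\[
\lVert X\rVert_{1,2}\le l^{\max(0,\,1/2-1/q)}\lVert X\rVert_{p,q},\qquad \lVert Y\rVert_{2,\infty}\le l^{\max(0,\,1/q-1/2)}\lVert Y\rVert_{q,r},\qquad \lVert M\rVert_{\infty,1}\le m^{1-1/p}n^{1/r}\lVert M\rVert_{r,p},
\]
each obtained by one application of the standard inclusion $\lVert z\rVert_s\le d^{\max(0,\,1/s-1/t)}\lVert z\rVert_t$ on the domain and one on the codomain of the operator in question. The two $l$-exponents collapse via the identity $\max(0,1/2-1/q)+\max(0,1/q-1/2)=|1/q-1/2|$; the $m$- and $n$-exponents carry no absolute value because the comparisons $\lVert z\rVert_1\le m^{1-1/p}\lVert z\rVert_p$ (for $p\ge 1$) and $\lVert u\rVert_r\le n^{1/r}\lVert u\rVert_\infty$ go only one way. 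Multiplying the three inequalities, combining with $|\tr(XMY)|\le K_G^\mathbb{F}\lVert X\rVert_{1,2}\lVert Y\rVert_{2,\infty}\lVert M\rVert_{\infty,1}$ from \eqref{eq:gro1}, and then dividing by $\lVert X\rVert_{p,q}\lVert Y\rVert_{q,r}\lVert M\rVert_{r,p}$ and taking the supremum yields the stated upper bound. Specializing to $(p,q,r)=(1,2,\infty)$ collapses all dimensional prefactors to $1$ and recovers \eqref{GI}.

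For the lower bound, I would test the ratio on $X=e_1 e_1^{\tp}\in\mathbb{F}^{l\times m}$, $M=e_1 e_1^{\tp}\in\mathbb{F}^{m\times n}$, $Y=e_1 e_1^{\tp}\in\mathbb{F}^{n\times l}$. A direct calculation gives $XMY=e_1 e_1^{\tp}$, so $\tr(XMY)=1$, while for every $p,q\in[1,\infty]$ one has $\lVert e_1 e_1^{\tp}\rVert_{p,q}=1$: the supremum of $|u_1|/\lVert u\rVert_p$ is attained at $u=e_1$ and is bounded above by $1$ because $|u_1|\le\lVert u\rVert_\infty\le\lVert u\rVert_p$ for $p\ge 1$. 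This gives $\lVert\mu_{l,m,n}\rVert_{p,q,r}\ge 1$, which implies the stated lower bound. The main obstacle is the bookkeeping in the upper bound: one must track carefully which of $p,q,r$ lies on which side of $2$, so that the two-sided comparison at $q$ assembles into the symmetric $|1/q-1/2|$, while the one-sided comparisons on $m$ and $n$ produce the asymmetric $m^{1-1/p}$ and $n^{1/r}$.
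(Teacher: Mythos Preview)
Your proof is correct. The upper bound argument is essentially identical to the paper's (Theorem~\ref{thm:upper}): the paper splits into the cases $1\le q\le 2$ and $2<q\le\infty$ where you use the unified form $\max(0,1/2-1/q)+\max(0,1/q-1/2)=|1/q-1/2|$, but the underlying $\ell^p$-comparisons and the reduction to \eqref{eq:gro1} are the same.

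For the lower bound you take a genuinely different, more direct route. The paper (Theorem~\ref{thm:lower}) passes through the tensor spectral norm: it uses the sharp inequality $\lVert M\rVert_{p,q}\le c_{q,2}(m)c_{2,p}(n)\lVert M\rVert_F$ to obtain, for every $\tau$, the comparison $\lVert\tau\rVert_\sigma\le\lVert\tau\rVert_{p,q,r}\cdot l^{|1/q-1/2|}m^{|1/p-1/2|}n^{|1/r-1/2|}$, and then plugs in $\lVert\mu_{l,m,n}\rVert_\sigma=1$. Your rank-one test on $e_1e_1^{\tp}$ bypasses this machinery and yields the strictly stronger statement $\lVert\mu_{l,m,n}\rVert_{p,q,r}\ge 1$, from which the claimed lower bound follows since the dimensional prefactor is $\ge 1$. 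What the paper's detour buys is a norm comparison valid for \emph{arbitrary} $\tau$, not just $\mu_{l,m,n}$; what your approach buys is a shorter argument and a uniform lower bound independent of $l,m,n$.
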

\begin{theorem}[Uniqueness of Grothendieck's inequality]\label{thm:unique1}
Let $1 \le p,q,r\le \infty$ and $l,m,n\in\mathbb{N}$. Then $\lVert\mu_{l,m,n}\rVert_{p,q,r}$ is uniformly bounded for all $l, m,n \in \mathbb{N}$ if and only if
\[
(p,q,r) \in \{ (1,2,\infty), \; (\infty, 1, 2), \; (2, \infty, 1)\}.
\]
\end{theorem}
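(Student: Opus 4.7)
\emph{Plan.} The sufficient direction is immediate from Theorem~\ref{thm:GroHol}: for $(p,q,r)=(1,2,\infty)$ the three exponents $|1/q-1/2|$, $1-1/p$ and $1/r$ all vanish, giving $\lVert\mu_{l,m,n}\rVert_{1,2,\infty}\le K_G^{\mathbb{F}}$ uniformly, and cyclic invariance (Lemma~\ref{lem:simple}\eqref{it:cyclic}) extends this to $(\infty,1,2)$ and $(2,\infty,1)$. For the converse I would proceed by a sequence of explicit test matrices. The first is an \emph{identity test}: taking $l=m=n=N$ and $X=Y=M=I_N$, one has $\lVert I_N\rVert_{p,q}=N^{(1/q-1/p)_+}$, and a short calculation shows the three cyclic exponents satisfy $(1/q-1/p)_+ + (1/r-1/q)_+ + (1/p-1/r)_+ = \max(1/p,1/q,1/r)-\min(1/p,1/q,1/r)$, so the ratio equals $N^{1-(\max-\min)}$. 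Uniform boundedness then forces $\max-\min\ge 1$, which since $1/p,1/q,1/r\in[0,1]$ pins one coordinate to $1$, another to $\infty$, and leaves the third as some $s\in[1,\infty]$. By cyclic invariance only the orbits $(p,q,r)=(1,\infty,s)$ and $(p,q,r)=(1,s,\infty)$ remain to be ruled out.

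To kill $(1,\infty,s)$ I would use a Hadamard matrix $H\in\mathbb{F}^{N\times N}$ (Sylvester over $\mathbb{R}$ for $N=2^{k}$, or the DFT matrix over $\mathbb{C}$), so $|H_{ij}|=1$ and $H^{\ast}H=NI_N$. The crucial estimate is the Cauchy--Schwarz bound $\lVert H\rVert_{\infty,1}\le N^{3/2}$, from $\lVert Hx\rVert_1\le\sqrt N\lVert Hx\rVert_2=N\lVert x\rVert_2\le N^{3/2}\lVert x\rVert_\infty$. For $s<\infty$ take $X=H/\sqrt N$, $Y=H^{\ast}/\sqrt N$, $M=I_N$: then $\tr(XMY)=(1/N)\tr(HH^{\ast})=N$, while $\lVert X\rVert_{1,\infty}=N^{-1/2}$, and using $\lVert H^\ast\rVert_{\infty,2}=N$ together with the Cauchy--Schwarz bound and H\"older interpolation one gets $\lVert Y\rVert_{\infty,s}\le N^{\max(1/s,1/2)}$ and $\lVert I_N\rVert_{s,1}=N^{1-1/s}$, so the ratio is $\ge N^{\min(1/s,1/2)}\to\infty$. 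The leftover $s=\infty$ is covered by $X=H$, $Y=I_N$, $M=H^{\ast}$: here $\tr(HH^{\ast})=N^2$, $\lVert X\rVert_{1,\infty}=\lVert Y\rVert_{\infty,\infty}=1$ and $\lVert M\rVert_{\infty,1}\le N^{3/2}$, yielding ratio $\ge\sqrt N$.

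To kill $(1,s,\infty)$ with $s\ne2$, the same two Hadamard configurations serve in complementary ranges of $s$. For $s>2$ take $X=H$, $Y=I_N$, $M=H^{\ast}$: $\tr(XMY)=\tr(HH^{\ast})=N^2$, $\lVert X\rVert_{1,s}=N^{1/s}$ (each column of $H$ has $N$ entries of unit modulus), $\lVert Y\rVert_{s,\infty}=1$, and $\lVert M\rVert_{\infty,1}\le N^{3/2}$, hence ratio $\ge N^{1/2-1/s}\to\infty$. For $s<2$ the mirror choice $X=I_N$, $Y=H$, $M=H^{\ast}$ gives $\lVert Y\rVert_{s,\infty}=N^{1-1/s}$ (the max row $s^{\ast}$-norm of $H$) and ratio $\ge N^{1/s-1/2}\to\infty$. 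Both exponents vanish precisely at $s=2$, as they must.

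The hardest step is this last one: after the identity test one still must distinguish $s=2$ from every other $s$ when $1$ and $\infty$ sit in correct positions, and no purely diagonal test can do this. The Hadamard matrix is exactly right because it simultaneously provides a trace of size $N^{2}$ (via $\tr(HH^{\ast})=N^{2}$) and the sharp cut-type bound $\lVert H\rVert_{\infty,1}=O(N^{3/2})$, which is itself a quantitative form of Grothendieck's inequality. The same combinatorial object that saturates the upper bound in Theorem~\ref{thm:GroHol} thus witnesses the matching lower bound, singling out $s=2$ as the unique good middle value.
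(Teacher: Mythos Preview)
Your proof is correct and follows essentially the same route as the paper: first an identity test (your $X=Y=M=I_N$ is the square case of the paper's Proposition~\ref{pqr_identitymatrix}) forces $\min(p,q,r)=1$ and $\max(p,q,r)=\infty$, and then Hadamard matrices with the key bound $\lVert H\rVert_{\infty,1}\le N^{3/2}$ finish the remaining one-parameter families. The packaging differs only slightly---your $\max-\min$ identity for the cyclic sum of positive parts is cleaner than the paper's case split, and you treat $s<2$ and $s>2$ by mirrored test triples directly, whereas the paper invokes the H\"older-conjugation symmetry $\lVert\mu_{l,m,n}\rVert_{p,q,r}=\lVert\mu_{l,m,n}\rVert_{r^*,q^*,p^*}$ (Lemma~\ref{lem:simple}\eqref{it:conjugate}) to halve the work---but the underlying ideas are identical.
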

Theorem~\ref{thm:GroHol} follows from Theorems~\ref{thm:upper} and \ref{thm:lower}. Theorem~\ref{thm:unique1} is just Theorem~\ref{thm:unique}.

\section{Strassen matrix multiplication tensor}\label{sec:struct}

An important observation for us, obvious to anyone familiar with tensors \cite{Landsberg, Lang, HLA} but perhaps less so to those accustomed to regarding (erroneously) a tensor as a ``multiway array,'' is that the bilinear operator
\begin{equation}\label{eq:mm}
\beta \in \mathbb{F}^{l \times m} \times  \mathbb{F}^{m \times n} \to  \mathbb{F}^{l \times n}, \qquad (X, Y) \mapsto XY,
\end{equation}
and the trilinear functional
\begin{equation}\label{eq:tr}
\tau : \mathbb{F}^{l \times m} \times  \mathbb{F}^{m \times n} \times  \mathbb{F}^{n \times l} \to \mathbb{F}, \qquad (X, Y, Z) \mapsto \tr(XYZ),
\end{equation}
are given by\footnote{To be more precise, by the universal property of tensor products \cite[Chapter~XVI, \S1]{Lang}, $\beta$ induces a linear map $\beta_* : \mathbb{F}^{l \times m} \otimes  \mathbb{F}^{m \times n} \to  \mathbb{F}^{l \times n}$ and $\tau$ induces a linear map $\tau_* :  \mathbb{F}^{l \times m} \otimes  \mathbb{F}^{m \times n} \otimes  \mathbb{F}^{n \times l} \to \mathbb{F}$, i.e., $\beta_* \in (\mathbb{F}^{l \times m})^* \otimes  (\mathbb{F}^{m \times n})^* \otimes  \mathbb{F}^{l \times n}$ and $\tau_* \in (\mathbb{F}^{l \times m})^* \otimes  (\mathbb{F}^{m \times n})^* \otimes (\mathbb{F}^{n \times l})^*$. We identify $\beta, \tau$ with the linear maps $\beta_*, \tau_*$ they induce.} the same $3$-tensor in
\[
 (\mathbb{F}^{l \times m})^* \otimes  (\mathbb{F}^{m \times n})^* \otimes  \mathbb{F}^{l \times n} \cong (\mathbb{F}^{l \times m})^* \otimes  (\mathbb{F}^{m \times n})^* \otimes  (\mathbb{F}^{n \times l})^*.
\]
In other words, as $3$-tensors, there is no difference between the product of two matrices and the trace of product of three matrices.

To see this, let $E_{ij} \in \mathbb{F}^{m \times n}$ denote the matrix with $1$ in its $(i,j)$th entry and zeros everywhere else, so that $\{ E_{ij} : i=1,\dots, m; \; j =1,\dots, n\}$ is the standard basis for $\mathbb{F}^{m \times n}$. Its dual basis for the dual space of linear functionals
\[
(\mathbb{F}^{m \times n})^* \coloneqq  \{ \varphi : \mathbb{F}^{m \times n} \to \mathbb{F} : \varphi(\alpha X + \beta Y) = \alpha \varphi(X) + \beta \varphi(Y) \}
\]
is then given by  $\{ \varepsilon_{ij} : i=1,\dots, m; \; j =1,\dots, n\}$ where $\varepsilon_{ij}:\mathbb{F}^{m\times n}\rightarrow\mathbb{F}$, $X \mapsto x_{ij}$, is the linear
functional that takes an $m\times n$ matrix to its $(i,j)$th entry.
Now choose the standard inner product on $\mathbb{F}^{m \times n}$, i.e., $\langle X, Y\rangle = \tr(X^\tp Y)$. Then $\varepsilon_{ij}(X) = \langle E_{ij}, X \rangle$ for all $X \in \mathbb{F}^{m \times n}$,
which allows us to identify $(\mathbb{F}^{ m \times n})^*$ with $\mathbb{F}^{n \times m}$ and linear functional $\varepsilon_{ij} \in( \mathbb{F}^{ m \times n})^* $ with the matrix $E_{ji} \in \mathbb{F}^{n \times m}$.

It remains to observe that the usual formula for matrix-matrix product gives
\begin{align*}
\beta(X,Y) &= \sum_{i,k=1}^{l,n} \left(\sum_{j=1}^m x_{ij} y_{jk} \right) E_{ik}
 = \sum_{i,k=1}^{l,n} \left(\sum_{j=1}^m \varepsilon_{ij}(X) \varepsilon_{jk}(Y) \right) E_{ik} \\
 & = \sum_{i,k=1}^{l,n} \left(\sum_{j=1}^m (\varepsilon_{ij} \otimes \varepsilon_{jk}) (X, Y) \right) E_{ik}
 =  \left(\sum_{i,j,k=1}^{l,m,n} \varepsilon_{ij} \otimes \varepsilon_{jk}  \otimes  E_{ik} \right) (X, Y),
\end{align*}
and thus
\begin{equation}\label{eq:beta}
\beta = \sum_{i,j,k=1}^{l,m,n} \varepsilon_{ij} \otimes \varepsilon_{jk}  \otimes  E_{ik} \in
 (\mathbb{F}^{l \times m})^* \otimes  (\mathbb{F}^{m \times n})^* \otimes  \mathbb{F}^{l \times n} .
\end{equation}
A similar simple calculation,
\begin{align*}
\tau(X, Y, Z)&  =\sum_{i,j,k=1}^{l,m,n} x_{ij} y_{jk} z_{ki} =\sum_{i,j,k=1}^{l,m,n} \varepsilon_{ij}(X)\varepsilon_{jk}(Y)\varepsilon_{ki}(Z)\\
&  = \sum_{i,j,k=1}^{l,m,n} (\varepsilon_{ij}\otimes \varepsilon_{jk}\otimes\varepsilon_{ki}) (X,Y,Z) = \left(\sum_{i,j,k=1}^{l,m,n} \varepsilon_{ij}\otimes \varepsilon_{jk}\otimes\varepsilon_{ki} \right) (X,Y, Z) ,
\end{align*}
gives
\begin{equation}\label{eq:tau}
\tau = \sum_{i,j,k=1}^{l,m,n} \varepsilon_{ij}\otimes \varepsilon_{jk}\otimes\varepsilon_{ki}  \in
 (\mathbb{F}^{l \times m})^* \otimes  (\mathbb{F}^{m \times n})^* \otimes  (\mathbb{F}^{n \times l})^*.
\end{equation}
By our identification, $(\mathbb{F}^{ m \times n})^*= \mathbb{F}^{n \times m}$ and $\varepsilon_{ki} = E_{ik}$. So we see from \eqref{eq:beta} and \eqref{eq:tau} that indeed $\beta = \tau$ as $3$-tensors. We denote this tensor by $\mu_{l,m,n}$. This has been variously called the \emph{Strassen matrix multiplication tensor} or the \emph{structure tensor for matrix-matrix product} \cite{BCS, Landsberg, HLA, YL}.

\section{Grothendieck's constant and Strassen's tensor}\label{sec:gro}

Let $l,m,n$ be positive integers and let $M = (M_{ij})\in \mathbb{F}^{m\times n}$. Let  $x_{1},\dots,x_{m}, y_{1},\dots,y_{n} \in\mathbb{F}^l$ be vectors of unit $2$-norm. We will regard $x_{1},\dots,x_{m}$ as columns of a matrix $X\in\mathbb{F}^{l\times m}$ and $y_{1}^\tp,\dots,y_{n}^\tp$ as rows of a matrix $Y\in\mathbb{F}^{n\times l}$.

Recall that for any $p\ge  1$ with  H\"older conjugate $p^*$, i.e., $1/p+ 1/p^*=1$, we have
\begin{equation}\label{eq:pqnorm}
\|X\|_{1,p} \coloneqq \max_{z\neq0}\frac{\|Xz\|_p}{\|z\|_1}=\max_{i=1,\dots, m}\|x_i\|_p,\qquad
\|Y\|_{p,\infty}\coloneqq \max_{z\neq0}\frac{\|Yz\|_\infty}{\|z\|_p}=\max_{i=1,\dots, n}\|y_i\|_{p^*},
\end{equation}
and
\[
\|M\|_{\infty,1}\coloneqq \max_{z\neq0}\frac{\|Mz\|_{1}}{\|z\|_{\infty}}
=\max_{|\delta_j|=1}\sum_{i=1}^{m}\left|\sum_{j=1}^{n}M_{ij}\delta_j \right|
=\max_{|\varepsilon_i|=1, \; |\delta_j|=1}\left|\sum_{i=1}^m\sum_{j=1}^n M_{ij} \varepsilon_i \delta_j \right|,
\]
which may be further simplified for $\mathbb{F}=\mathbb{R}$ as
\begin{equation}\label{eq:inftyone}
\|M\|_{\infty,1} =\max_{\varepsilon_i =\pm1, \; \delta_j=\pm1}\left|\sum_{i=1}^m\sum_{j=1}^n M_{ij} \varepsilon_i \delta_j \right| =  \max_{\varepsilon, \delta \in \{\pm 1\}^n} \lvert \varepsilon^\tp M \delta \rvert.
\end{equation}
We refer the reader to \cite{nuclear} for a proof that
\begin{equation}\label{eq:norm}
\lVert \tau \rVert_{1,2,\infty} \coloneqq  \max_{X,Y,M\neq0}\frac{|\tau(X, M, Y)|}{\|X\|_{1,2}\|Y\|_{2,\infty}\|M\|_{\infty,1}}
\end{equation}
defines a norm for any tensor $\tau \in  (\mathbb{F}^{l \times m})^* \otimes  (\mathbb{F}^{m \times n})^* \otimes  (\mathbb{F}^{n \times l})^*$, regarded as a trilinear functional.

Since
\[
\sum_{i=1}^{m}\sum_{j=1}^n M_{ij}\langle x_i,y_j\rangle=
\begin{cases}
\tr(XMY) &\text{if} \; \mathbb{F} = \mathbb{R}, \\
\tr(XM\overline{Y})&\text{if} \; \mathbb{F} = \mathbb{C},
\end{cases}
\]
we see that Grothendieck's inequality \eqref{GI} may be stated as
\begin{equation}\label{Gro_norm_combine}
\max_{X,Y,M\neq0}\frac{|\tr(XMY)|}{\|X\|_{1,2}\|Y\|_{2,\infty}\|M\|_{\infty,1}}\le  K_G^\mathbb{F},
\end{equation}
when $\mathbb{F} = \mathbb{R}$ and as
\[
\max_{X,Y,M\neq0}\frac{|\tr(XM\overline{Y})|}{\|X\|_{1,2}\|Y\|_{2,\infty}\|M\|_{\infty,1}}\le  K_G^\mathbb{F},
\]
when $\mathbb{F} = \mathbb{C}$. However, in the latter case, we may write
\[
\max_{X,Y,M\neq0} \frac{|\tr(XM\overline{Y})|}{\|X\|_{1,2}\|Y\|_{2,\infty}\|M\|_{\infty,1}} = \max_{X,\overline{Y},M\neq0} \frac{|\tr(XM\overline{Y})|}{\|X\|_{1,2}\|\overline{Y}\|_{2,\infty}\|M\|_{\infty,1}} = \max_{X,Y,M\neq0} \frac{|\tr(XMY)|}{\|X\|_{1,2}\|Y\|_{2,\infty}\|M\|_{\infty,1}}
\]
as matrix $(p,q)$-norms are invariant under complex conjugation. Hence \eqref{Gro_norm_combine} in fact gives Grothendieck's inequality  for both $\mathbb{F} = \mathbb{R}$ and $\mathbb{C}$.
By our discussion in Section~\ref{sec:struct} and our norm in \eqref{eq:norm}, \eqref{Gro_norm_combine} is just
\[
\lVert \mu_{l,m,n} \rVert_{1,2,\infty} \le K_G^\mathbb{F}
\]
where
\begin{equation}\label{tensor_mu}
\mu_{l,m,n}\coloneqq \sum_{i=1}^{l}\sum_{j=1}^{m}\sum_{k=1}^{n}
\varepsilon_{ij}\otimes\varepsilon_{jk}\otimes\varepsilon_{ki}\in\bigl(  \mathbb{F}^{l\times m}\bigr)
^{\ast}\otimes\bigl(  \mathbb{F}^{m\times n}\bigr)  ^{\ast}\otimes\bigl(
\mathbb{F}^{n\times l }\bigr)  ^{\ast}
\end{equation}
is the Strassen matrix multiplication tensor for the product of $l \times m$ and $m \times n$ matrices.

This allows us to define Grothendieck's constant in terms of tensor norms: For $\mathbb{F} = \mathbb{R}$ or $\mathbb{C}$,
\[
K_G^\mathbb{F} = \sup_{l,m,n\in\mathbb{N}}\lVert\mu_{l,m,n}\rVert_{1, 2, \infty}.
\]
Since $\lVert\mu_{l,m,n}\rVert_{1, 2, \infty} = \lVert\mu_{m+n,m,n}\rVert_{1, 2, \infty}$ for all $l \ge m+n$,
\[
K_G^\mathbb{F} = \sup_{m,n\in\mathbb{N}}\lVert\mu_{m+n,m,n}\rVert_{1, 2, \infty} = \sup_{n\in\mathbb{N}}\lVert\mu_{2n,n,n}\rVert_{1, 2, \infty}.
\]
In addition, the  Grothendieck constant of order $l \in \mathbb{N}$ \cite{Acin, Bene, Hirsch} may be defined as
\[
K_G^\mathbb{F} (l) =\sup_{m,n\in\mathbb{N}}\lVert\mu_{l,m,n}\rVert_{1,2,\infty}.
\]

\section{Grothendieck--H\"older inequality}\label{sec:ineq}

The  norm  in \eqref{eq:norm} admits a natural generalization to arbitrary $p,q,r\in[1,\infty]$ as
\[
\lVert \tau \rVert_{p,q,r}\coloneqq \max_{X,Y,M\neq0}\frac{|\tau(X,M,Y)|}{\lVert X\rVert_{p,q}\lVert Y\rVert_{q,r}\lVert
M\rVert_{r,p}}
\]
defined for any $\tau \in  (\mathbb{F}^{l \times m})^* \otimes  (\mathbb{F}^{m \times n})^* \otimes  (\mathbb{F}^{n \times l})^*$, regarded as a trilinear functional
\[
\tau :  \mathbb{F}^{l \times m} \times  \mathbb{F}^{m \times n} \times  \mathbb{F}^{n \times l} \to \mathbb{F}.
\]
In this article, we will only be interested in $\tau = \mu_{l,m,n}$, the Strassen tensor. We first state some simple observations that will be useful later.
\begin{lemma}\label{lem:simple}
Let $p,q,r\in[1,\infty]$. Then  the $(p,q,r)$-norm of $ \mu_{l,m,n}$
\begin{enumerate}[\upshape (i)]
\item\label{it:cyclic}  is invariant under cyclic permutation of $p,q, r$,
\[
\lVert \mu_{l,m,n}\rVert_{p,q,r} =\lVert \mu_{l,m,n}\rVert_{r,p,q} = \lVert \mu_{l,m,n}\rVert_{q,r,p};
\]
\item\label{it:conjugate} transforms under H\"older conjugation as
\[
\lVert \mu_{l,m,n}\rVert_{p,q,r} = \lVert \mu_{l,m,n}\rVert_{r^*,q^*,p^*}.
\]
Recall that $p^*$ is the H\"older conjugate of $p$, i.e., $1/p + 1/p^* = 1$.
\end{enumerate}
\end{lemma}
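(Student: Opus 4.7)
The plan is to derive both identities directly from the definition
\[
\lVert\mu_{l,m,n}\rVert_{p,q,r}=\max_{X,M,Y\neq 0}\frac{|\tr(XMY)|}{\lVert X\rVert_{p,q}\lVert Y\rVert_{q,r}\lVert M\rVert_{r,p}}
\]
by invoking three elementary ingredients: (a) cyclic invariance of the trace, $\tr(XMY)=\tr(MYX)=\tr(YXM)$; (b) the identity $\tr(A)=\tr(A^{\tp})$ for square $A$; and (c) the standard duality $\lVert A\rVert_{p,q}=\lVert A^{\tp}\rVert_{q^*,p^*}$ for matrix $(p,q)$-norms, itself immediate from $\langle Ax,y\rangle=\langle x,A^{\tp}y\rangle$ together with the duality between $p$- and $p^*$-norms.

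For part (i) I would use (a) with the bijective relabeling $(X',M',Y'):=(M,Y,X)$, whose components $X'\in\mathbb{F}^{m\times n}$, $M'\in\mathbb{F}^{n\times l}$, $Y'\in\mathbb{F}^{l\times m}$ sweep out their full spaces as $X,M,Y$ do. The numerator becomes $|\tr(X'M'Y')|$, and the identifications $\lVert X\rVert_{p,q}=\lVert Y'\rVert_{p,q}$, $\lVert Y\rVert_{q,r}=\lVert M'\rVert_{q,r}$, $\lVert M\rVert_{r,p}=\lVert X'\rVert_{r,p}$ rearrange the denominator into $\lVert X'\rVert_{r,p}\lVert Y'\rVert_{p,q}\lVert M'\rVert_{q,r}$, which matches the defining template for the $(r,p,q)$-norm of $\mu_{m,n,l}$. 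This yields
\[
\lVert\mu_{l,m,n}\rVert_{p,q,r}=\lVert\mu_{m,n,l}\rVert_{r,p,q},
\]
and iterating once closes the three-term cycle. Note that the cyclic shift on $(p,q,r)$ is intrinsically coupled to a cyclic shift on $(l,m,n)$; the lemma's statement is to be read under the natural identification of the Strassen tensor across cyclic relabelings of its dimension triple.

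For part (ii) I would combine (b) and (c): $\tr(XMY)=\tr\bigl((XMY)^{\tp}\bigr)=\tr(Y^{\tp}M^{\tp}X^{\tp})$. Setting $(X',M',Y'):=(Y^{\tp},M^{\tp},X^{\tp})$ makes the numerator $|\tr(X'M'Y')|$, and (c) converts each denominator factor into a conjugate-indexed norm of the transposed matrix:
\[
\lVert X\rVert_{p,q}=\lVert Y'\rVert_{q^*,p^*},\qquad \lVert Y\rVert_{q,r}=\lVert X'\rVert_{r^*,q^*},\qquad \lVert M\rVert_{r,p}=\lVert M'\rVert_{p^*,r^*}.
\]
Reading off the defining template identifies the resulting expression as the $(r^*,q^*,p^*)$-norm of the Strassen tensor, with the dimension triple suitably rearranged; the identification from part (i) then absorbs that rearrangement and delivers the stated identity.

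The main obstacle I foresee is not conceptual but purely clerical: with three matrices, three ambient dimensions, and the alternating index pattern $(p,q),(q,r),(r,p)$ in the denominator, one easily mis-routes a subscript. My approach is to keep a small ledger listing each matrix alongside its ambient space and its two norm indices, both before and after every substitution, so that each verification reduces to matching entries mechanically.
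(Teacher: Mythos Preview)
Your approach is essentially the same as the paper's: part~(i) via cyclic invariance of the trace and part~(ii) via $\lVert A\rVert_{p,q}=\lVert A^{\dagger}\rVert_{q^*,p^*}$ together with $\lvert\tr(XMY)\rvert=\lvert\tr(Y^{\dagger}M^{\dagger}X^{\dagger})\rvert$ (the paper uses conjugate transpose rather than transpose, but either works since $(p,q)$-norms are conjugation-invariant). You are in fact more careful than the paper in flagging that the cyclic shift on $(p,q,r)$ is accompanied by a cyclic shift on $(l,m,n)$; one small slip is that the dimension rearrangement produced in part~(ii) is the transposition $(l,m,n)\mapsto(l,n,m)$, which is not itself a cyclic shift, so it is not literally ``absorbed'' by part~(i) alone---but this is harmless under the same natural identification of the Strassen tensor you already invoke.
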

\begin{proof}
Since the numerator $\tr(XMY) = \tr(MYX) = \tr(YXM)$ and the denominator is the product $\|X\|_{p,q}\|M\|_{r,p}\|Y\|_{q,r}$, cyclic permutations of  $(p,q)$,  $(r,p)$, $(q,r)$ leave the quotient
\[
\frac{|\tr(XMY)|}{\|X\|_{p,q}\|M\|_{r,p}\|Y\|_{q,r}}
\]
invariant. Now just observe that the cyclic permutations
\[
(p,q),\, (r,p), \, (q,r) \; \to \;  (q,r),\, (p,q), \, (r,p) \;  \to \; (r,p),\, (q,r), \, (p,q) 
\]
correspond to the following permutations
\[
(p,q,r) \; \to \; (q,r,p) \; \to \; (r,p,q).
\]

Let $X^\dag$ denote the conjugate transpose of $X$. Since $\lvert \tr(XMY) \rvert = \lvert \tr( Y^\dag M^\dag  X^\dag)\rvert$ and $\|X\|_{p,q} = \|X^\dag\|_{q^*,p^*}$, we have
\[
\frac{|\tr(XMY)|}{\|X\|_{p,q}\|Y\|_{q,r}\|M\|_{r,p}} = 
\frac{| \tr( Y^\dag M^\dag  X^\dag)|}{\|Y^\dag\|_{r^*,q^*}\|X^\dag \|_{q^*,p^*}\|M^\dag\|_{p^*,r^*}}.
\]
Taking maximum over all nonzero $X, Y, M$ yields the required equality. Note that the proof works over both $\mathbb{R}$ and $\mathbb{C}$.
\end{proof}
A straightforward application of H\"older's inequality yields an upper bound for $\lVert\mu_{l,m,n}\rVert_{p,q,r}$.
\begin{theorem}\label{thm:upper}
Let $p,q,r\in[1,\infty]$ and $l,m,n\in\mathbb{N}$. For any nonzero matrices $X\in\mathbb{F}^{l\times m},Y\in\mathbb{F}^{n\times l}$ and $M\in\mathbb{F}^{m\times n}$, the following inequality is sharp:
\begin{equation}\label{Gro_inq_lemma}
\frac{|\tr(XMY)|}{\|X\|_{p,q}\|Y\|_{q,r}\|M\|_{r,p}}\le  \frac{|\tr(XMY)|}{\|X\|_{1,2}\|Y\|_{2,\infty}\|M\|_{\infty,1}}\cdot l^{|1/q-1/2|}\cdot m^{1-1/p}\cdot n^{1/r}.
\end{equation}
Furthermore, we have a generalization of Grothendieck's inequality:
\begin{equation}\label{Gro_inq_pqr}
\lVert\mu_{l,m,n}\rVert_{p,q,r}=\max_{X,Y,M\neq0}\frac{|\tr(XMY)|}{\|X\|_{p,q}\|Y\|_{q,r}\|M\|_{r,p}}\le  K_G^\mathbb{F}\cdot l^{|1/q-1/2|}\cdot m^{1-1/p}\cdot n^{1/r}.
\end{equation}
\end{theorem}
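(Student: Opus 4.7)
The proof of \eqref{Gro_inq_lemma} reduces to the operator-norm inequality
\[
\|X\|_{1,2}\,\|Y\|_{2,\infty}\,\|M\|_{\infty,1}\le l^{|1/q-1/2|}\,m^{1-1/p}\,n^{1/r}\,\|X\|_{p,q}\,\|Y\|_{q,r}\,\|M\|_{r,p}.
\]
Once this is proven, \eqref{Gro_inq_pqr} falls out instantly: divide by the $(p,q,r)$-denominator, take the maximum over nonzero $X,Y,M$, and apply Grothendieck's inequality in the form \eqref{Gro_norm_combine} to the resulting $(1,2,\infty)$-ratio. The whole argument rests on one elementary tool, the vector comparison $\|v\|_s \le k^{1/s-1/t}\|v\|_t$ valid for $1\le s\le t\le\infty$ and $v \in \mathbb{F}^k$, applied three times, once per matrix.

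For $X$, note $\|X\|_{1,2}=\max_j\|Xe_j\|_2$ with $\|e_j\|_p=1$, so $\|Xe_j\|_q\le\|X\|_{p,q}$; the comparison applied to $Xe_j\in\mathbb{F}^l$ yields $\|Xe_j\|_2\le l^{(1/2-1/q)_+}\|Xe_j\|_q$, and hence $\|X\|_{1,2}\le l^{(1/2-1/q)_+}\|X\|_{p,q}$. For $Y$, take a unit-$\ell_2$ vector $z\in\mathbb{F}^l$; the comparison gives $\|z\|_q\le l^{(1/q-1/2)_+}$, whence $\|Yz\|_r\le l^{(1/q-1/2)_+}\|Y\|_{q,r}$, and since $\|\cdot\|_\infty\le\|\cdot\|_r$ on $\mathbb{F}^n$, one obtains $\|Y\|_{2,\infty}\le l^{(1/q-1/2)_+}\|Y\|_{q,r}$. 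The key algebraic identity is $(1/2-1/q)_+ + (1/q-1/2)_+ = |1/q-1/2|$, since exactly one summand is nonzero for each regime $q\le 2$ or $q\ge 2$; this merges the two $X$- and $Y$-exponents into precisely the $l^{|1/q-1/2|}$ stated by the theorem. No $m$ or $n$ factor enters either step, because the extremal inputs for $\|\cdot\|_{1,2}$ and $\|\cdot\|_{2,\infty}$ lie on the unit sphere of every $\ell_p$.

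The entire $m^{1-1/p}n^{1/r}$ factor comes from $M$: for $z\in\mathbb{F}^n$ with $\|z\|_\infty=1$ the comparison gives $\|z\|_r\le n^{1/r}$, so $\|Mz\|_p\le n^{1/r}\|M\|_{r,p}$; applying the comparison once more to $Mz\in\mathbb{F}^m$ gives $\|Mz\|_1\le m^{1-1/p}\|Mz\|_p$. Chaining produces $\|M\|_{\infty,1}\le m^{1-1/p}n^{1/r}\|M\|_{r,p}$, and multiplying the three bounds establishes the displayed operator-norm inequality, hence \eqref{Gro_inq_lemma}. Plugging into the $(1,2,\infty)$-quotient and invoking Grothendieck's inequality \eqref{Gro_norm_combine} yields \eqref{Gro_inq_pqr}. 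As a consistency check, at $(p,q,r)=(1,2,\infty)$ every factor collapses to $1$ and the statement degenerates to the classical Grothendieck inequality.

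The main obstacle will be the sharpness assertion, i.e., that none of the exponents $|1/q-1/2|$, $1-1/p$, $1/r$ can be reduced. The plan is to exhibit rank-one test matrices that saturate all three vector comparisons simultaneously: since $\|v\|_s=k^{1/s-1/t}\|v\|_t$ precisely when $v$ has entries of equal modulus, natural candidates are matrices with a single column (or row) proportional to $\mathbf{1}_k/k^{1/2}$ and zeros elsewhere, with the column/row choice dictated by whether $q\le 2$ or $q\ge 2$. A short case-by-case calculation should verify that, with appropriate coordination between $X$, $Y$, and $M$, the three Hölder bounds are attained together, so that the dimensional factors are genuinely unremovable.
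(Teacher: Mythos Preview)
Your proof of the inequality \eqref{Gro_inq_lemma} is correct and is essentially the paper's argument, just packaged more uniformly: where the paper splits into the two cases $1\le q\le 2$ and $2<q\le\infty$ and in each case puts the entire $l$-factor on one of $X$ or $Y$, you use the positive-part notation $(\,\cdot\,)_+$ to write both bounds at once and then observe $(1/2-1/q)_++(1/q-1/2)_+=|1/q-1/2|$. The underlying vector inequalities and the chain for $M$ are identical, and \eqref{Gro_inq_pqr} follows in both versions by maximizing and invoking \eqref{Gro_norm_combine}.

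The genuine gap is sharpness. You state a plan but do not carry it out, and the plan as written is vague about how the three matrices are to be coordinated so that all the H\"older comparisons saturate \emph{simultaneously} while $\tr(XMY)\ne 0$. The paper resolves this by exhibiting explicit rank-one test matrices and computing their $(p,q)$- and $(1,2)$-norms directly. For $1\le q\le 2$ it takes $X=E_{l,m}$ (a single $1$ in the $(1,1)$ entry), $Y=R_{n,l}$ (first row all $1$'s), $M=J_{m,n}$ (all $1$'s); for $2<q\le\infty$ it takes $X=C_{l,m}$ (first column all $1$'s), $Y=E_{n,l}$, $M=J_{m,n}$. In each case one checks the six norms by hand and verifies that both sides of \eqref{Gro_inq_lemma} coincide. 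Note in particular that the extremizer for $M$ is the all-ones matrix rather than a single-column matrix, which your sketch does not anticipate; you should carry out the explicit computation rather than leave it as a heuristic.
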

\begin{proof}
First let $1\le  q\le 2$. H\"{o}lder's inequality together with the  fact that $\|x\|_p\le \|x\|_q$ whenever $q\le  p$ give us
\begin{equation}\label{proof:inq0}
\lVert X\rVert_{1,2}\le \lVert X\rVert_{1,q}\le \lVert X\rVert_{p,q}\qquad\text{and}\qquad
\lVert Y\rVert_{2,\infty}\le \lVert Y\rVert_{2,r}\le  l^{1/q-1/2}\lVert Y\rVert_{q,r}.
\end{equation}
The same argument also gives $\|M\|_{\infty,p}\le \|M\|_{\infty,1}\le  m^{1-1/p}\|M\|_{\infty,p}$ for $1\le  p\le \infty$
and thus
\begin{equation}\label{proof:inq1}
 \lVert M\rVert_{\infty,1}\le  m^{1-1/p}\lVert M\rVert_{\infty,p}\le  n^{1/r}\cdot m^{1-1/p}\lVert M\rVert_{r,p}.
\end{equation}
\eqref{Gro_inq_lemma} then follows from \eqref{proof:inq0} and \eqref{proof:inq1}.
To see that it is sharp, we use the following\footnote{These are standard in matrix theory, often used to demonstrate sharpness of various matrix inequalities.} $m\times n$ rank-one matrices:%
{\footnotesize
\[
E_{m,n}\coloneqq  \begin{bmatrix}
1 & 0 &\dots & 0\\
0 & 0 &\dots & 0\\
\vdots &\vdots & & \vdots\\
0 & 0 &\dots & 0
\end{bmatrix},
\;\;
C_{m,n}\coloneqq\begin{bmatrix}
1 & 0 &\dots & 0\\
1 & 0 &\dots & 0\\
\vdots &\vdots & & \vdots\\
1 & 0 &\dots & 0
\end{bmatrix},
\;\;
R_{m,n}\coloneqq\begin{bmatrix}
1 & 1 &\dots & 1\\
0 & 0 &\dots & 0\\
\vdots &\vdots & & \vdots\\
0 & 0 &\dots & 0
\end{bmatrix},
\;\;
J_{m,n}\coloneqq\begin{bmatrix}
1 & 1 &\dots & 1\\
1 & 1 &\dots & 1\\
\vdots &\vdots & & \vdots\\
1 & 1 &\dots & 1
\end{bmatrix}.
\]}%
It is easy to check that
\begin{alignat*}{4}
\|E_{l,m}\|_{p,q}&=1, \qquad &\|R_{n,l}\|_{q,r}&=l^{1-1/q},  \qquad &\|J_{m,n}\|_{r,p}&=m^{1/p}\cdot n^{1-1/r},\\
\|E_{l,m}\|_{1,2}&=1,  \qquad &\|R_{n,l}\|_{2,\infty}&=l^{ 1/2}, \qquad &\|J_{m,n}\|_{\infty,1}&=mn.
\end{alignat*}
Since \eqref{Gro_inq_lemma} becomes an equality when $X=E_{l,m}$, $Y=R_{n,l}$, and $M=J_{m,n}$, it is sharp for   $1\le  q\le 2$.

Next let $2<q\le \infty$. Similarly, we have
\[
l^{1/q-1/2}\lVert X\rVert_{1,2}\le \lVert X\rVert_{1,q}\le \lVert X\rVert_{p,q}\qquad\text{and}\qquad
\lVert Y\rVert_{2,\infty}\le \lVert Y\rVert_{2,r}\le  \lVert Y\rVert_{q,r},
\]
which together with \eqref{proof:inq1} give us \eqref{Gro_inq_lemma}. In this case the sharpness follows from
\begin{alignat*}{4}
\|C_{l,m}\|_{p,q}&=l^{1/q},\qquad &\|E_{n,l}\|_{q,r}&=1,\qquad & \|J_{m,n}\|_{r,p}&=m^{1/p}\cdot n^{1-1/r},\\
\|C_{l,m}\|_{1,2}&=l^{1/2}, \qquad & \|E_{n,l}\|_{2,\infty}&=1, \qquad & \|J_{m,n}\|_{\infty,1}&=mn,
\end{alignat*}
and selecting $X=C_{l,m}$, $Y=E_{n,l}$, and $M=J_{m,n}$.

\eqref{Gro_inq_pqr} follows from taking maximum over nonzero $X,M,Y$ and supremum over $l,m,n$. When $(p,q,r)=(1,2,\infty)$, it yields  Grothendieck's inequality \eqref{Gro_norm_combine}.
\end{proof}

The upper bound in \eqref{Gro_inq_pqr} depends on $l,m,n$ except when $(p,q,r)$ is  $(1,2,\infty)$ or a cyclic permutation (by Lemma~\ref{lem:simple}\eqref{it:cyclic}). An immediate question is whether a uniform bound independent of $l,m,n$  might perhaps also exist for some other values of $(p,q,r)$, i.e.,
\begin{equation}\label{Gro_inq_general}
K_{p,q,r} \coloneqq  \sup_{l,m,n\in\mathbb{N}}\lVert\mu_{l,m,n}\rVert_{p,q,r}  < \infty?
\end{equation}
In Section~\ref{sec:unique}, we will see that $K_{p,q,r}  = \infty$ for all $(p,q,r) \notin  \{ (1,2,\infty), \; (\infty, 1, 2), \; (2, \infty, 1)\}$. Nevertheless, we stress that the absence of a uniform bound is only limited to the class of $(p,q,r)$-norms in \eqref{eq:norm}. For example, we may consider the tensor \emph{spectral norm} \cite{nuclear} of  $\mu_{l,mn,n}$,
\[
\lVert \mu_{l,m,n}\rVert_{\sigma}\coloneqq \max_{X,Y,M\neq0}\frac{|\tr(XMY)|}{\lVert X\rVert_{F}\lVert Y\rVert_{F}\lVert,
M\rVert_{F}}
\]
where the norm on $X,Y,M$ is the  matrix Frobenius (i.e., Hilbert--Schmidt) norm. In this case,
\begin{equation}\label{eq:spectral}
\lVert \mu_{l,m,n}\rVert_{\sigma}=1, \qquad \text{for all} \; l,m,n \in \mathbb{N},
\end{equation}
since, by Cauchy--Schwartz and  the submultiplicativity of the Frobenius norm,
\[
|\operatorname{tr}(XMY)|\le \lVert X\rVert_{F}\lVert MY\rVert
_{F}\le \lVert M\rVert_{F}\lVert X\rVert_{F}\lVert Y\rVert_{F},
\]
and equality is attained by choosing $M,X,Y$ with $1$ in the $(1,1)$th entry and $0$ everywhere else.

We will use \eqref{eq:spectral} to obtain lower bounds on $\lVert \mu_{l,m,n}\rVert_{p,q,r}$ below.  \eqref{Gro_inq_pqr} and \eqref{eq:lower} will collectively be referred to as the \emph{Grothendieck--H\"older inequality}. 
\begin{theorem}\label{thm:lower}
Let $p,q,r\in[1,\infty]$ and $l,m,n\in\mathbb{N}$. Then
\begin{equation}\label{eq:lower}
\frac{1}{l^{|1/q-1/2|}\cdot m^{|1/p-1/2|}\cdot n^{|1/r-1/2|}}\le \lVert \mu_{l,m,n}\rVert_{p,q,r}.
\end{equation}
\end{theorem}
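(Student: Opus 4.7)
The plan is to exploit the identity $\lVert \mu_{l,m,n}\rVert_\sigma = 1$ already established in \eqref{eq:spectral} by dominating each of the three matrix norms that appear in the denominator of $\lVert \mu_{l,m,n}\rVert_{p,q,r}$ by a suitable dimensional factor times the Frobenius norm. The engine is the pair of elementary H\"older embeddings, valid for any $z \in \mathbb{F}^k$,
\[
\lVert z\rVert_p \le k^{\max(1/p-1/2,\,0)} \lVert z\rVert_2, \qquad \lVert z\rVert_2 \le k^{\max(1/2-1/p,\,0)} \lVert z\rVert_p,
\]
together with the trivial inequality $\lVert A\rVert_\sigma \le \lVert A\rVert_F$.

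The first step is to insert these bounds into $\lVert X\rVert_{p,q} = \max_{z \ne 0} \lVert Xz\rVert_q/\lVert z\rVert_p$, namely, pass from $\lVert Xz\rVert_q$ up to $\lVert Xz\rVert_2$ (with a factor in $l$), sandwich $\lVert Xz\rVert_2 \le \lVert X\rVert_\sigma \lVert z\rVert_2 \le \lVert X\rVert_F \lVert z\rVert_2$, and trade $\lVert z\rVert_2$ back for $\lVert z\rVert_p$ (with a factor in $m$). This gives
\[
\lVert X\rVert_{p,q} \le l^{\max(1/q-1/2,\,0)}\, m^{\max(1/2-1/p,\,0)}\, \lVert X\rVert_F,
\]
and the identical argument, with the appropriate reassignment of dimensions, gives the analogous bounds for $\lVert Y\rVert_{q,r}$ with $Y \in \mathbb{F}^{n \times l}$ and for $\lVert M\rVert_{r,p}$ with $M \in \mathbb{F}^{m \times n}$.

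The second step is to multiply the three bounds. Each of $l$, $m$, $n$ appears in exactly two of the three estimates, with exponents of the form $\max(x,0)$ and $\max(-x,0)$; these collapse via the identity $\max(x,0) + \max(-x,0) = |x|$ to yield precisely
\[
\lVert X\rVert_{p,q}\,\lVert Y\rVert_{q,r}\,\lVert M\rVert_{r,p} \le l^{|1/q-1/2|}\, m^{|1/p-1/2|}\, n^{|1/r-1/2|}\, \lVert X\rVert_F \lVert Y\rVert_F \lVert M\rVert_F.
\]
Dividing $|\tr(XMY)|$ by both sides, taking the maximum over nonzero $X,Y,M$, and invoking \eqref{eq:spectral} then produces the claimed lower bound.

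No real obstacle arises. The only thing to watch is the case split $p \lessgtr 2$ (and similarly for $q$ and $r$) in the vector embeddings, but that is exactly what the $\max(x,0)+\max(-x,0)=|x|$ identity absorbs, and it is also the reason the absolute-value exponents in the stated bound appear symmetrically on $l$, $m$, and $n$.
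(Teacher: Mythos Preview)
Your proof is correct and follows essentially the same route as the paper's: bound each of $\lVert X\rVert_{p,q}$, $\lVert Y\rVert_{q,r}$, $\lVert M\rVert_{r,p}$ by the Frobenius norm times the dimensional factors $c_{q,2}(l)c_{2,p}(m)$, etc., multiply, collapse the exponents via $\max(x,0)+\max(-x,0)=|x|$, and invoke $\lVert\mu_{l,m,n}\rVert_\sigma=1$. The only difference is cosmetic: the paper imports the bound $\lVert M\rVert_{p,q}\le c_{q,2}(m)c_{2,p}(n)\lVert M\rVert_F$ from \cite{Klaus}, whereas you derive it on the spot from the vector H\"older embeddings and $\lVert A\rVert_\sigma\le\lVert A\rVert_F$, making your version slightly more self-contained.
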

\begin{proof}
For $n\in\mathbb{N}$ and $p,q\in[1,\infty]$, let
\[
c_{p,q}(n)\coloneqq n^{\max\{0, 1/p-1/q\}}.
\]
Then for any $M\in\mathbb{F}^{m\times n}$, the following sharp inequality holds \cite[Theorem~4.3]{Klaus}, 
\[
\lVert M\rVert_{p,q}\le  c_{q,2}(m)c_{2,p}(n)\lVert M\rVert_{F}.
\]
It follows that
\[
\lVert X\rVert_{p,q}\le  c_{q,2}(l)c_{2,p}(m)\lVert X\rVert_{F}, \quad \lVert Y\rVert_{q,r}\le  c_{r,2}(n)c_{2,q}(l)\lVert Y\rVert_{F},\quad
\lVert M\rVert_{r,p}\le  c_{p,2}(m)c_{2,r}(n)\lVert M\rVert_{F},
\]
and for any tensor $\tau \in  (\mathbb{F}^{l \times m})^* \otimes  (\mathbb{F}^{m \times n})^* \otimes  (\mathbb{F}^{n \times l})^*$, we have
\[
\lVert \tau \rVert_{\sigma}\le \lVert \tau\rVert_{p, q, r}\cdot l^{|1/q-1/2|}\cdot m^{|1/p-1/2|}\cdot n^{|1/r-1/2|}.
\]
Plugging in $\tau = \mu_{l,m,n}$ and using \eqref{eq:spectral},  we obtain \eqref{eq:lower}.
\end{proof}

A practical reason for wanting to ascertain \eqref{Gro_inq_general} is that if
\begin{equation}\label{eq:poly}
(p,q)\; \text{and} \; (q,r) \in \{ (1,1),\; (2,2),\; (\infty,\infty),\; (1,q),\; (q,\infty) \},
\end{equation}
then $\lVert X\rVert_{p,q}$ and $\lVert Y\rVert_{q,r}$ can be computed in polynomial time (to arbitrary precision)  and
\[
\max_{X,Y\neq0}\frac{|\operatorname{tr}(XMY)|}{\lVert X\rVert_{p,q}\lVert
Y\rVert_{q,r}}\le  K_{p,q,r}\lVert M\rVert_{r,p}%
\]
in principle gives a polynomial-time approximation of $\lVert M\rVert_{r,p}$, which is NP-hard \cite{HO10} if $(r,p)$ is not one of the special cases in \eqref{eq:poly}. Unfortunately, we now know that as $K_{p,q,r} = \infty$ in all other cases, this only works when $(p,q,r) \in  \{ (1,2,\infty), \; (\infty, 1, 2), \; (2, \infty, 1)\}$, all three of which are equivalent to Grothendieck's inequality.

\section{Grothendieck's inequality is unique}\label{sec:unique}

We show that $(p,q,r)=(1,2,\infty)$ is, up to a cyclic permutation, the only case for which \eqref{Gro_inq_general} holds. We will first rule out a large number of cases with the following proposition.
\begin{proposition}\label{pqr_identitymatrix}
Let $p,q,r\in[1,\infty]$. If there exists a finite constant $K_{p,q,r} > 0$ such that
$\lVert\mu_{l,m,n}\rVert_{p,q,r} \le K_{p,q,r}$ for all $l, m, n \in \mathbb{N}$, then
\[
\min (p,q,r) =1 \qquad \text{and} \qquad \max (p,q,r) =\infty.
\]
\end{proposition}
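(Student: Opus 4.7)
The plan is to exhibit a concrete one-parameter family of test matrices forcing $\lVert \mu_{n,n,n}\rVert_{p,q,r}$ to grow in $n$ unless the extremal condition holds. Specifically, I would take the diagonal cube $l=m=n$ and plug in the simplest possible triple, $X=Y=M=I_n$, the $n\times n$ identity. This immediately gives numerator $\lvert\tr(I_n I_n I_n)\rvert=n$, so the entire problem reduces to estimating the denominator $\lVert I_n\rVert_{p,q}\,\lVert I_n\rVert_{q,r}\,\lVert I_n\rVert_{r,p}$.

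The first real computation is the matrix $(p,q)$-norm of the identity, which amounts to $\max_{x\ne 0}\lVert x\rVert_q/\lVert x\rVert_p$. Elementary embedding inequalities show that this ratio is $1$ (attained at $e_1$) when $p\le q$, and $n^{1/q-1/p}$ (attained at $(1,\dots,1)^{\tp}$) when $p>q$. Hence
\[
\lVert I_n\rVert_{p,q}=n^{\max(0,\,1/q-1/p)}.
\]
Writing $a=1/p,\,b=1/q,\,c=1/r$, the product of the three relevant identity norms therefore equals $n^{\alpha}$, where
\[
\alpha = \max(0,b-a)+\max(0,c-b)+\max(0,a-c).
\]
The next step is the small algebraic identity that this cyclic sum telescopes to $\max(a,b,c)-\min(a,b,c)$; a quick case analysis according to the ordering of $a,b,c$ (six cases, all yielding the same formula) handles this.

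Combining these, I obtain
\[
\lVert\mu_{n,n,n}\rVert_{p,q,r}\ \ge\ \frac{n}{n^{\alpha}}\ =\ n^{\,1-(\max(1/p,1/q,1/r)-\min(1/p,1/q,1/r))}.
\]
If $\lVert\mu_{l,m,n}\rVert_{p,q,r}$ is uniformly bounded in $l,m,n$, then letting $n\to\infty$ forces the exponent to be $\le 0$, i.e.\ $\max(1/p,1/q,1/r)-\min(1/p,1/q,1/r)\ge 1$. Since each of $1/p,1/q,1/r$ lies in $[0,1]$, the only way this can happen is $\max(1/p,1/q,1/r)=1$ and $\min(1/p,1/q,1/r)=0$, i.e.\ $\min(p,q,r)=1$ and $\max(p,q,r)=\infty$, as required.

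The main obstacle in this argument is essentially bookkeeping: verifying the cyclic-sum identity for $\alpha$ and confirming that the identity matrix is a legitimate choice in all three slots (requiring the cube $l=m=n$, which is permitted since the hypothesized bound must hold for all $l,m,n$). Note that the lower bound from Theorem~\ref{thm:lower} is too weak for this purpose --- it only yields $l^{-|1/q-1/2|}m^{-|1/p-1/2|}n^{-|1/r-1/2|}$, which tends to $0$ --- so the identity-matrix test, which instead exploits $\tr(I_n^3)=n$ growing linearly, is genuinely needed.
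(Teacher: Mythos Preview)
Your proof is correct and follows essentially the same approach as the paper: both plug identity matrices into all three slots, compute $\lVert I\rVert_{p,q}$, and read off the growth exponent. Your packaging via the telescoping identity $\max(0,b-a)+\max(0,c-b)+\max(0,a-c)=\max(a,b,c)-\min(a,b,c)$ is in fact a bit cleaner than the paper's six-case display and subsequent appeal to cyclic invariance, but the underlying idea is identical.
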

\begin{proof}
Let $I_{m,n}\in\mathbb{F}^{m\times n}$ be the matrix obtained by appending zero rows or columns to the identity matrix\footnote{Note that $I_{n,n} = I_n$. For consistency, we will always use the latter notation when it is a square matrix.} $I_n$ or $I_m$,
\[
I_{m,n} \coloneqq 
\begin{cases}
[I_{n}, 0_{m - n}]^\tp
& \text{if}\; m\ge  n, \\
[I_{m},0_{n - m}]
& \text{if}\; m<n.
\end{cases}
\]
Then its matrix $(p,q)$-norm  is
\begin{equation}\label{norm:identity}
\lVert I_{m,n}\rVert_{p,q}=
\begin{cases}
\min\{m,n\}^{1/q-1/p} & \text{if}\; p\ge  q,\\
1 & \text{if}\; p<q.
\end{cases}
\end{equation}
This follows from an easy calculation using H\"older inequality: For $m\ge  n$,
\[
\lVert I_{m,n} \rVert_{p,q}=\max_{z\neq0}\frac{\lVert I_{m,n}z \rVert_q}{\lVert z\rVert_p}=\max_{z\neq0}\frac{\lVert z \rVert_q}{\lVert z\rVert_p}=\begin{cases}
n^{1/q-1/p} &\text{if}\; p\ge  q,\\
1 &\text{if}\; p<q,
\end{cases}
\]
and for $m<n$,
\[
\lVert I_{m,n}\rVert_{p,q}=\max_{z\neq0}\frac{\lVert I_{m,n}z \rVert_q}{\lVert z\rVert_p}=\max_{z\neq0}\frac{\lVert z_m \rVert_q}{\lVert z\rVert_p}=\max_{z_m\neq0}\frac{\lVert z_m \rVert_q}{\lVert z_m\rVert_p}=\begin{cases}
m^{1/q-1/p} &\text{if}\; p\ge  q,\\
1  &\text{if}\; p<q,
\end{cases}
\]
where $z_m=[z_1,\dots,z_m]\in\mathbb{F}^m$ is the vector comprising the first $m$ entries of $z$.

Set $X=I_{l, m}$, $Y=I_{n,l}$, and $M=I_{m, n}$. Then  $\tr(XMY)=\min\{l,m,n\}$, and by \eqref{norm:identity}, we obtain
\[
\frac{|\tr(XMY)|}{\|X\|_{p,q}\|Y\|_{q,r}\|M\|_{r,p}}=\begin{cases}
\min\{l,m,n\}\cdot\min\{m,n\}^{1/r-1/p} &\text{if}\; p\le  q\le  r,\\
\min\{l,m,n\}\cdot \min\{l,n\}^{1/q-1/r}\cdot\min\{m,n\}^{1/r-1/p} & \text{if}\; p\le  r\le  q,\\
\min\{l,m,n\}\cdot \min\{l,m\}^{1/p-1/q}\cdot\min\{m,n\}^{1/r-1/p} &\text{if}\; q\le  p\le  r,\\
\min\{l,m,n\}\cdot \min\{l,m\}^{1/p-1/q} &\text{if}\; q\le  r\le  p,\\
\min\{l,m,n\}\cdot\min\{l,n\}^{1/q-1/r} &\text{if}\; r\le  p\le  q,\\
\min\{l,m,n\}\cdot \min\{l,m\}^{1/p-1/q}\cdot \min\{l,n\}^{1/q-1/r}  &\text{if}\; r\le  q\le  p.
\end{cases}
\]
Suppose $l=2n$, $m=n$ and $p\le  q\le  r$, then
\[
\lim_{n\rightarrow\infty}\frac{|\tr(XMY)|}{\|X\|_{p,q}\|Y\|_{q,r}\|M\|_{r,p}}=\lim_{n\rightarrow\infty}
n^{1/r-1/p+1}=\infty
\]
unless $p=1$ and $r=\infty$. Repeating the argument for all possible permutations of $(p,q,r)$ and taking advantage of  Lemma~\ref{lem:simple}\eqref{it:cyclic}, we conclude that $\min (p,q,r) =1$ and $\max (p,q,r) =\infty$ is necessary for the uniform boundedness of $\lVert\mu_{l,m,m}\rVert_{p,q,r}$.
\end{proof}

We will next eliminate the remaining possibilities. Our approach will rely on the existence of Hadamard matrices of arbitrarily large dimensions. Indeed, an $n \times n$ Hadamard matrix $H_n \in\{\pm 1\}^{n\times n}$ exists for any $n$  divisible by $4$, or, for concreteness, we may set $H_n = \begin{bsmallmatrix} 
1&1\\1&-1\end{bsmallmatrix}^{\otimes k}$ with $n = 2^k$ \cite[Section~2.1]{Hadamard}. The bottom line is that we may let $n \to \infty$ in the proof below.
\begin{theorem}[Uniqueness of Grothendieck's inequality]\label{thm:unique}
Let $1 \le p,q,r\le \infty$ and $l,m,n\in\mathbb{N}$. Then $\lVert\mu_{l,m,n}\rVert_{p,q,r}$ is uniformly bounded for all $l, m,n \in \mathbb{N}$ if and only if
\[
(p,q,r) \in  \{ (1,2,\infty), \; (\infty, 1, 2), \; (2, \infty, 1)\}.
\]
\end{theorem}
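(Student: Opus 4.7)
The plan is to handle the two directions separately. The ``if'' direction is immediate: uniform boundedness at $(p,q,r) = (1,2,\infty)$ is Grothendieck's inequality \eqref{Gro_inq_pqr} itself (Theorem~\ref{thm:upper}), and the other two listed triples follow from cyclic invariance (Lemma~\ref{lem:simple}\eqref{it:cyclic}).

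For the converse I will first apply Proposition~\ref{pqr_identitymatrix} to reduce to triples with $\min(p,q,r) = 1$ and $\max(p,q,r) = \infty$. Up to cyclic rotation every such triple has one of the two shapes
\[
(1, s, \infty) \qquad \text{or} \qquad (1, \infty, s), \qquad s \in [1, \infty].
\]
H\"older conjugation (Lemma~\ref{lem:simple}\eqref{it:conjugate}), combined in the second shape with one extra cyclic rotation, further identifies $s$ with its conjugate $s^*$, so I may assume $s \in [1, 2]$. The remaining job is then to rule out $(1, s, \infty)$ for $s \in [1, 2)$ and $(1, \infty, s)$ for every $s \in [1, 2]$.

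For each of these triples I will exhibit an explicit witness $(X, M, Y)$ built from a Sylvester Hadamard matrix $H_n = H_2^{\otimes k} \in \{\pm 1\}^{n \times n}$, $n = 2^k$, exploiting the identities $H_n^\tp = H_n$ and $H_n^2 = n I_n$ to produce the large numerator $\tr(H_n^2) = n^2$. The two denominator estimates I will rely on, both from Cauchy--Schwarz, are $\lVert H_n\rVert_{\infty, 1} \le n^{3/2}$ and $\lVert H_n\rVert_{\infty, s} \le n^{1/s + 1/2}$ for $s \in [1, 2]$; the remaining norms are the routine $\lVert I_n\rVert_{1, s} = 1 = \lVert H_n\rVert_{1, \infty}$, $\lVert H_n\rVert_{s, \infty} = n^{1/s^*}$, and $\lVert I_n\rVert_{s, 1} = n^{1 - 1/s}$. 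In Case~A, $(p,q,r) = (1, s, \infty)$ with $s \in [1, 2)$, I set $l = m = n$ and $(X, M, Y) = (I_n, H_n, H_n)$, producing a quotient at least $n^2/(n^{1/s^*}\cdot n^{3/2}) = n^{1/s - 1/2} \to \infty$. In Case~B, $(p,q,r) = (1, \infty, s)$ with $s \in [1, 2]$, I set $(X, M, Y) = (H_n, I_n, H_n)$, producing a quotient at least $n^2/(n^{1/s + 1/2}\cdot n^{1 - 1/s}) = n^{1/2} \to \infty$. In both cases uniform boundedness fails, completing the converse.

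The main obstacle is organisational rather than computational: one has to correctly enumerate the remaining orbits under the combined cyclic and H\"older symmetries, and then, for each orbit, pick a witness sufficiently asymmetric to escape the Grothendieck-type cancellations that save $(1, 2, \infty)$. A self-consistency check that the Case~A exponent $n^{1/s - 1/2}$ vanishes exactly at $s = 2$ is reassuring: it signals that at the lone borderline triple my witness degenerates and hands the problem over to the Grothendieck inequality, which is precisely what intervenes there.
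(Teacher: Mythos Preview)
Your proposal is correct and follows essentially the same route as the paper: reduce via Proposition~\ref{pqr_identitymatrix} and the cyclic/H\"older symmetries of Lemma~\ref{lem:simple} to the two one-parameter families $(1,s,\infty)$ and $(1,\infty,s)$, then use Hadamard-matrix witnesses to force the quotient to diverge. Your specific witnesses differ in detail---and your Case~B is marginally cleaner, handling all $s\in[1,2]$ with the single choice $(H_n,I_n,H_n)$ and a uniform exponent $n^{1/2}$, whereas the paper treats $r=2$ as a separate sub-case---but the argument is otherwise the same.
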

\begin{proof}
We will see that it suffices to take $l=m=n$ throughout this proof. 
By Lemma~\ref{lem:simple}\eqref{it:cyclic} and Proposition~\ref{pqr_identitymatrix}, we may assume that $p=1$ and either $q=\infty$ or $r=\infty$.  We will show that $\tr(XMY)$ is unbounded for judiciously chosen $n \times n$ real matrices $X$, $M$, and $Y$ as $n \to \infty$. 

\myuline{\textsc{Case I:} $(1, q, \infty)$, $1 \le q\le \infty$.}  Suppose $2 < q \le \infty$.   Let  $X_0 =n^{-1/q} \Delta$ for some arbitrary $\Delta = (\delta_{ij}) \in\{\pm 1\}^{n\times n}$ and let $Y_0 =I_n$.  Then $\|X_0\|_{1,q}= n^{-1/q} \|\Delta\|_{1,q} = 1$ and  $\|Y_0\|_{q,\infty}=\|I_n\|_{q,\infty}=1$ by \eqref{eq:pqnorm}. For any $M = (M_{ij})\in\mathbb{R}^{n\times n}$,
\[
\max_{\|X\|_{1,q}, \;\|Y\|_{q,\infty}\le 1} \lvert \tr(XMY) \rvert \ge   \lvert \tr(X_0MY_0) \rvert = n^{-1/q} \lvert \tr(\Delta M) \rvert
= n^{-1/q} \Bigl\lvert \sum\nolimits_{i,j=1}^n \delta_{ij} M_{ij} \Bigr\rvert.
\]
Since $\Delta\in\{\pm 1\}^{n\times n}$ is arbitrary, we will choose $\delta_{ij}$ so that $ \delta_{ij} M_{ij}$ is nonnegative. Thus
\begin{equation}\label{eq:XY}
\max_{\|X\|_{1,q}, \;\|Y\|_{q,\infty}\le 1} \lvert \tr(XMY) \rvert  \ge  n^{-1/q}\sum_{i,j=1}^n |M_{ij}|.
\end{equation}
Let $H_n \in\{\pm 1\}^{n\times n}$ be a Hadamard matrix. So $H_nH_n^\tp  =n I_n$ and all singular values of $H_n$ are $\sqrt{n}$ \cite{FriAli}. Therefore, by \eqref{eq:inftyone},
\begin{equation}\label{eq:Hnorm}
\|H_n\|_{\infty,1} = \max_{\varepsilon, \delta \in \{\pm 1\}^n} \lvert \varepsilon^\tp H_n \delta \rvert \le \sigma_{\max} (H_n) \lVert \varepsilon\rVert_2 \lVert \delta\rVert_2  = n^{3/2}.
\end{equation}
Let $M = n^{-3/2} H_n$. Then  $\|M \|_{\infty,1}\le 1$ and by \eqref{eq:XY},
\[
\max_{\|X\|_{1,q}, \;\|Y\|_{q,\infty}, \; \|M \|_{\infty,1} \le 1} \lvert \tr(XMY)  \rvert \ge
 n^{-1/q} \times n^{-3/2} \times n^2 = n^{1/2- 1/q} \to \infty
\]
as $n \to \infty$.

Suppose $1 \le q < 2$. Since the H\"older conjugates are  $r^*=1$, $2 < q^* \le \infty$, and $p^*=\infty$, by Lemma~\ref{lem:simple}\eqref{it:conjugate}, this reduces to the case we just treated.
 
\myuline{\textsc{Case II:} $(1, \infty, r)$, $1 \le r \le \infty$.}  For  $r = \infty$, we have $(1,\infty, \infty)$, which is same as the $q= \infty$ case in \textsc{Case I}. For  $r = 1$, we have $(1,\infty, 1)$, but by Lemma~\ref{lem:simple}\eqref{it:cyclic}, this is equivalent to $(1,1,\infty)$, which is same as the $q= 1$ case in \textsc{Case I}. So we may assume $1 < r < \infty$.
 
Suppose $1 < r  < 2$.  Let $M=n^{1/r-1}I_n$ and $Y=n^{-3/2} H_n$ where $H_n \in\{\pm 1\}^{n\times n}$  is a Hadamard matrix. Then $\|M\|_{r,1} = n^{1/r-1} \|I_n\|_{r,1}= 1$ by \eqref{norm:identity}, and  $\|Y\|_{\infty,r}\le \|Y\|_{\infty,1} =n^{-3/2}\| H_n \|_{\infty,1} \le 1$ by \eqref{eq:Hnorm}.  We choose $X\in\{\pm 1\}^{n\times n}$ such that $\tr (XH_n)=n^2$ and thus $\tr(XY) = n^{-1/2}$. Clearly $\|X\|_{1,\infty} = 1$ by \eqref{eq:pqnorm}. Hence
\[
\tr (XMY) = n^{1/r-1}\tr(XY) =   n^{1/r-1/2} \to \infty
\]
as $n\to\infty$.

Suppose $2< r  < \infty$.  Since the H\"older conjugates are  $1 < r^*  < 2$, $q^* =1$, and $p^*=\infty$, by Lemma~\ref{lem:simple}\eqref{it:conjugate} , this is equivalent to the case $(r^*, 1, \infty)$. Now by  Lemma~\ref{lem:simple}\eqref{it:cyclic}, this is in turn equivalent to the case $(1, \infty, r^*)$ with  $1 < r^*  < 2$, which is the case we just treated.

Suppose $r=2$.  Let $Y=n^{-1} H_n$ where $H_n \in\{\pm 1\}^{n\times n}$ is again a Hadamard matrix.  Then
\[
\|H_n \|_{\infty,2}=\max_{ x\in\{\pm 1\}^{n}} \|H_n x\|_2 \le \sigma_{\max}(H_n)\sqrt{n}=n.
\]
So $\|Y\|_{\infty,2}\le 1$. Let $M=n^{-1/2} I_n$.  Then $\|M\|_{2,1}=1$ by \eqref{norm:identity}.  Let $X\in\{\pm 1\}^{n\times n}$ be such that $\tr( XH_n) = n^2$ and thus $\tr(XY) = n$. Clearly $\|X\|_{1,\infty} = 1$ by \eqref{eq:pqnorm}. We have
\[
\tr( XMY) = n^{-1/2}  \tr( XY) = n^{1/2} \to \infty
\]
as $n \to \infty$.
\end{proof}

\section{Conclusion}

We hope our characterization of Grothendieck's constant as a norm of the central object in the study of fast matrix multiplications would spur interactions between the two areas and perhaps even facilitate the determination of its exact value. Knowing that Grothendieck's inequality is a unique instance within a family of natural norm inequalities may help us better understand its ubiquity and utility. In fact, the way we formulate Grothendieck's inequality in \eqref{eq:gro1} facilitated our elementary proof of the inequality in \cite{ZFL}, which is one that works over both (i) $\mathbb{R}$ and (ii) $\mathbb{C}$ and yields both (iii) Krivine's bound and (iv) Haagerup's bound.

\subsection*{Acknowledgment}

The work in this article is generously supported by DARPA D15AP00109 and NSF IIS 1546413. LHL gratefully acknowledges the support of a DARPA Director's Fellowship and the Eckhardt Faculty Fund.

\bibliographystyle{abbrv}

\begin{thebibliography}{10}

\bibitem{Acin}
A.~Ac\'in, N.~Gisin, and B.~Toner.
\newblock Grothendieck's constant and local models for noisy entangled quantum
  states.
\newblock {\em Phys. Rev. A (3)}, 73(6, part A):062105, 5, 2006.

\bibitem{Alon1}
N.~Alon and E.~Berger.
\newblock The {G}rothendieck constant of random and pseudo-random graphs.
\newblock {\em Discrete Optim.}, 5(2):323--327, 2008.

\bibitem{Alon2}
N.~Alon, K.~Makarychev, Y.~Makarychev, and A.~Naor.
\newblock Quadratic forms on graphs.
\newblock {\em Invent. Math.}, 163(3):499--522, 2006.

\bibitem{Alon3}
N.~Alon and A.~Naor.
\newblock Approximating the cut-norm via {G}rothendieck's inequality.
\newblock {\em SIAM J. Comput.}, 35(4):787--803, 2006.

\bibitem{Arora}
S.~Arora, E.~Berger, E.~Hazan, G.~Kindler, and M.~Safra.
\newblock On non-approximability for quadratic programs.
\newblock {\em Proceedings of the 46th Annual IEEE Symposium on Foundations of
  Computer Science}, pages 206--215, 2005.

\bibitem{Braverman}
M.~Braverman, K.~Makarychev, Y.~Makarychev, and A.~Naor.
\newblock The {G}rothendieck constant is strictly smaller than {K}rivine's
  bound.
\newblock {\em Forum Math. Pi}, 1:e4, 42, 2013.

\bibitem{Bri3}
J.~Bri\"et, H.~Buhrman, and B.~Toner.
\newblock A generalized {G}rothendieck inequality and nonlocal correlations
  that require high entanglement.
\newblock {\em Comm. Math. Phys.}, 305(3):827--843, 2011.

\bibitem{Bri1}
J.~Bri\"et, F.~M. de~Oliveira~Filho, and F.~Vallentin.
\newblock The positive semidefinite {G}rothendieck problem with rank
  constraint.
\newblock In {\em Automata, languages and programming. {P}art {I}}, volume 6198
  of {\em Lecture Notes in Comput. Sci.}, pages 31--42. Springer, Berlin, 2010.

\bibitem{Bri2}
J.~Bri\"et, F.~M. de~Oliveira~Filho, and F.~Vallentin.
\newblock Grothendieck inequalities for semidefinite programs with rank
  constraint.
\newblock {\em Theory Comput.}, 10:77--105, 2014.

\bibitem{BCS}
P.~B\"{u}rgisser, M.~Clausen, and A.~Shokrollahi.
\newblock {\em Algebraic Complexity Theory}, volume 315 of {\em Grundlehren der
  Mathematischen Wissenschaften}.
\newblock Springer-Verlag, Berlin, 1997.

\bibitem{Charikar}
M.~Charikar and A.~Wirth.
\newblock Maximizing quadratic programs: extending {G}rothendieck's inequality.
\newblock {\em Proceedings of the 45th Annual IEEE Symposium on Foundations of
  Computer Science}, pages 54--60, 2004.

\bibitem{CW}
D.~Coppersmith and S.~Winograd.
\newblock Matrix multiplication via arithmetic progressions.
\newblock {\em J. Symbolic Comput.}, 9(3):251--280, 1990.

\bibitem{Davie}
A.~M. Davie.
\newblock Lower bound for $k_g$.
\newblock Unpublished note, 1984.

\bibitem{Davie2}
A.~M. Davie.
\newblock Matrix norms related to {G}rothendieck's inequality.
\newblock In {\em Banach spaces ({C}olumbia, {M}o., 1984)}, volume 1166 of {\em
  Lecture Notes in Math.}, pages 22--26. Springer, Berlin, 1985.

\bibitem{Bene}
P.~Divi\'{a}nszky, E.~Bene, and T.~V\'{e}rtesi.
\newblock Qutrit witness from the {G}rothendieck constant of order four.
\newblock {\em Phys. Rev.}, A(96), 2017.

\bibitem{Talwar}
C.~Dwork, A.~Nikolov, and K.~Talwar.
\newblock Efficient algorithms for privately releasing marginals via convex
  relaxations.
\newblock {\em Discrete Comput. Geom.}, 53(3):650--673, 2015.

\bibitem{Fishburn}
P.~C. Fishburn and J.~A. Reeds.
\newblock Bell inequalities, {G}rothendieck's constant, and root two.
\newblock {\em SIAM J. Discrete Math.}, 7(1):48--56, 1994.

\bibitem{FriAli}
S.~Friedland and M.~Aliabadi.
\newblock {\em Linear algebra and matrices}.
\newblock Society for Industrial and Applied Mathematics (SIAM), Philadelphia,
  PA, 2018.

\bibitem{nuclear}
S.~Friedland and L.-H. Lim.
\newblock Nuclear norm of higher-order tensors.
\newblock {\em Math. Comp.}, 87(311):1255--1281, 2018.

\bibitem{ZFL}
S.~Friedland, L.-H. Lim, and J.~Zhang.
\newblock An elementary proof of {G}rothendieck's inequalty.
\newblock arXiv:1711.10595, November 2017.

\bibitem{Grothendieck}
A.~Grothendieck.
\newblock R\'esum\'e de la th\'eorie m\'etrique des produits tensoriels
  topologiques.
\newblock {\em Bol. Soc. Mat. S\~ao Paulo}, 8:1--79, 1953.

\bibitem{Haagerup}
U.~Haagerup.
\newblock A new upper bound for the complex {G}rothendieck constant.
\newblock {\em Israel J. Math.}, 60(2):199--224, 1987.

\bibitem{HO10}
J.~M. Hendrickx and A.~Olshevsky.
\newblock Matrix {$p$}-norms are {NP}-hard to approximate if {$p\neq
  1,2,\infty$}.
\newblock {\em SIAM J. Matrix Anal. Appl.}, 31(5):2802--2812, 2010.

\bibitem{Heydari}
H.~Heydari.
\newblock Quantum correlation and {G}rothendieck's constant.
\newblock {\em J. Phys. A}, 39(38):11869--11875, 2006.

\bibitem{Hirsch}
F.~Hirsch, M.~T. Quintino, T.~V\'{e}rtesi, M.~Navascu\'{e}s, and N.~Brunner.
\newblock Better local hidden variable models for two-qubit werner states and
  an upper bound on the {G}rothendieck constant {$K_G(3)$}.
\newblock {\em Quantum}, 1(3), 2017.

\bibitem{Hi1}
F.~L. Hitchcock.
\newblock The expression of a tensor or a polyadic as a sum of products.
\newblock {\em J.\ Math.\ Phys.}, 6(1):164--189, 1927.

\bibitem{Hadamard}
K.~J. Horadam.
\newblock {\em Hadamard matrices and their applications}.
\newblock Princeton University Press, Princeton, NJ, 2007.

\bibitem{Jameson}
G.~J.~O. Jameson.
\newblock {\em Summing and nuclear norms in {B}anach space theory}, volume~8 of
  {\em London Mathematical Society Student Texts}.
\newblock Cambridge University Press, Cambridge, 1987.

\bibitem{Khot}
S.~Khot and A.~Naor.
\newblock Grothendieck-type inequalities in combinatorial optimization.
\newblock {\em Comm. Pure Appl. Math.}, 65(7):992--1035, 2012.

\bibitem{Khot2}
S.~Khot and A.~Naor.
\newblock Sharp kernel clustering algorithms and their associated
  {G}rothendieck inequalities.
\newblock {\em Random Structures Algorithms}, 42(3):269--300, 2013.

\bibitem{Kindler}
G.~Kindler, A.~Naor, and G.~Schechtman.
\newblock The {UGC} hardness threshold of the {$L_p$} {G}rothendieck problem.
\newblock {\em Math. Oper. Res.}, 35(2):267--283, 2010.

\bibitem{Klaus}
A.-L. Klaus and C.-K. Li.
\newblock Isometries for the vector {$(p,q)$} norm and the induced {$(p,q)$}
  norm.
\newblock {\em Linear and Multilinear Algebra}, 38(4):315--332, 1995.

\bibitem{Krivine}
J.-L. Krivine.
\newblock Constantes de {G}rothendieck et fonctions de type positif sur les
  sph\`eres.
\newblock {\em Adv. in Math.}, 31(1):16--30, 1979.

\bibitem{Landsberg}
J.~M. Landsberg.
\newblock {\em Tensors: geometry and applications}, volume 128 of {\em Graduate
  Studies in Mathematics}.
\newblock American Mathematical Society, Providence, RI, 2012.

\bibitem{Lang}
S.~Lang.
\newblock {\em Algebra}, volume 211 of {\em Graduate Texts in Mathematics}.
\newblock Springer-Verlag, New York, third edition, 2002.

\bibitem{LeGall}
F.~Le~Gall.
\newblock Powers of tensors and fast matrix multiplication.
\newblock In {\em I{SSAC} 2014---{P}roceedings of the 39th {I}nternational
  {S}ymposium on {S}ymbolic and {A}lgebraic {C}omputation}, pages 296--303.
  ACM, New York, 2014.

\bibitem{HLA}
L.-H. Lim.
\newblock {\em Tensors and hypermatrices}, volume 211 of {\em Handbook of
  Linear Algebra}.
\newblock CRC Press, Boca Raton, FL, second edition, 2013.

\bibitem{Lindenstrauss}
J.~Lindenstrauss and A.~Pe{\l}czy\'nski.
\newblock Absolutely summing operators in {$L_{p}$}-spaces and their
  applications.
\newblock {\em Studia Math.}, 29:275--326, 1968.

\bibitem{Linial}
N.~Linial and A.~Shraibman.
\newblock Lower bounds in communication complexity based on factorization
  norms.
\newblock {\em Random Structures Algorithms}, 34(3):368--394, 2009.

\bibitem{Pisier}
G.~Pisier.
\newblock {\em Factorization of linear operators and geometry of {B}anach
  spaces}, volume~60 of {\em CBMS Regional Conference Series in Mathematics}.
\newblock Published for the Conference Board of the Mathematical Sciences,
  Washington, DC; by the American Mathematical Society, Providence, RI, 1986.

\bibitem{Pisier2}
G.~Pisier.
\newblock Grothendieck's theorem, past and present.
\newblock {\em Bull. Amer. Math. Soc. (N.S.)}, 49(2):237--323, 2012.

\bibitem{Rag}
P.~Raghavendra.
\newblock Optimal algorithms and inapproximability results for every {CSP}?
  [extended abstract].
\newblock In {\em S{TOC}'08}, pages 245--254. ACM, New York, 2008.

\bibitem{Rag2}
P.~Raghavendra and D.~Steurer.
\newblock Towards computing the {G}rothendieck constant.
\newblock In {\em Proceedings of the {T}wentieth {A}nnual {ACM}-{SIAM}
  {S}ymposium on {D}iscrete {A}lgorithms}, pages 525--534. SIAM, Philadelphia,
  PA, 2009.

\bibitem{Reg1}
O.~Regev.
\newblock Bell violations through independent bases games.
\newblock {\em Quantum Inf. Comput.}, 12(1-2):9--20, 2012.

\bibitem{Reg2}
O.~Regev and B.~Toner.
\newblock Simulating quantum correlations with finite communication.
\newblock {\em SIAM J. Comput.}, 39(4):1562--1580, 2009/10.

\bibitem{Strass0}
V.~Strassen.
\newblock Gaussian elimination is not optimal.
\newblock {\em Numer.\ Math.}, 13(4):354--356, 1969.

\bibitem{Strass2}
V.~Strassen.
\newblock Vermeidung von {D}ivisionen.
\newblock {\em J. Reine Angew. Math.}, 264:184--202, 1973.

\bibitem{Strass}
V.~Strassen.
\newblock Rank and optimal computation of generic tensors.
\newblock {\em Linear Algebra Appl.}, 52/53:645--685, 1983.

\bibitem{Strass1}
V.~Strassen.
\newblock Relative bilinear complexity and matrix multiplication.
\newblock {\em J. Reine Angew. Math.}, 375/376:406--443, 1987.

\bibitem{Tsi1}
B.~S. Tsirelson.
\newblock Quantum generalizations of {B}ell's inequality.
\newblock {\em Lett. Math. Phys.}, 4(2):93--100, 1980.

\bibitem{Williams}
V.~V. Williams.
\newblock Multiplying matrices faster than {C}oppersmith--{W}inograd [extended
  abstract].
\newblock In {\em S{TOC}'12---{P}roceedings of the 2012 {ACM} {S}ymposium on
  {T}heory of {C}omputing}, pages 887--898. ACM, New York, 2012.

\bibitem{YL}
K.~Ye and L.-H. Lim.
\newblock Fast structured matrix computations: tensor rank and {C}ohn-{U}mans
  method.
\newblock {\em Found. Comput. Math.}, 18(1):45--95, 2018.

\end{thebibliography}

\end{document}